\documentclass[a4paper,UKenglish,cleveref, autoref, thm-restate]{lipics-v2021}




\nolinenumbers
\title{Monitoring Hyperproperties over Observed and Constructed Traces}

\author{Marek Chalupa}{IST Austria, Klosterneuburg, Austria}{Marek.Chalupa@ist.ac.at}{}{}

\author{Thomas A. Henzinger}{IST Austria, Klosterneuburg, Austria}{tah@ist.ac.at}{}{}

\author{Ana {Oliveira da Costa}}{IST Austria, Klosterneuburg, Austria}{ana.costa@ist.ac.at}{}{}

\authorrunning{M. Chalupa et al.} 

\Copyright{ } 

\ccsdesc[500]{Theory of computation~Automata over infinite objects} 

\keywords{Runtime Verification, Hyperproperties, Linearizability, Security} 

\category{} 

\relatedversion{} 



\acknowledgements{}


\EventEditors{}
\EventNoEds{0}
\EventLongTitle{}
\EventShortTitle{}
\EventAcronym{}
\EventYear{}
\EventDate{}
\EventLocation{}
\EventLogo{}
\SeriesVolume{}
\ArticleNo{}

\usepackage{bbding}
\usepackage{cancel}
\usepackage{xcolor}
\usepackage{mathpartir} 
\usepackage{booktabs}
\usepackage{mathtools}
\usepackage{stmaryrd}
\usepackage{xspace}
\usepackage{caption}
\usepackage{prftree}

\newif\ifIncludeAll


\usepackage{pifont}
%
%


\definecolor{istablue}{RGB}{37,59,144}
\definecolor{darkgreen}{RGB}{0,100,53}    
\definecolor{istagreen}{RGB}{0,100,53}    
\definecolor{istalightgreen}{RGB}{110,193,108} 
\definecolor{istalightgreen2}{RGB}{113,187,111}    
\definecolor{istalightazure}{RGB}{218,238,239}    
\definecolor{istablack}{RGB}{26,26,24}       
\definecolor{istablue}{RGB}{37,59,144}    
\definecolor{istaazure}{RGB}{0,154,163}    
\definecolor{istalightorange}{RGB}{254,218,163}    
\definecolor{istalightorange2}{RGB}{252,215,184}    
\definecolor{istared}{RGB}{225,9,44}    
\definecolor{istaorange}{RGB}{247,166,0}    
\definecolor{istaorange2}{RGB}{244,152,0}

\usepackage{listings}
\definecolor{dkgreen}{rgb}{0,0.6,0}
\definecolor{dkgreen2}{rgb}{0,0.5,0}
\definecolor{gray}{rgb}{0.45,0.45,0.45}
\definecolor{mauve}{rgb}{0.58,0,0.82}
\definecolor{editorGray}{rgb}{0.92, 0.92, 0.92}
\definecolor{editorOcher}{rgb}{1, 0.35, 0} 
\definecolor{editorGreen}{rgb}{0, 0.5, 0} 

\lstdefinelanguage{JavaScript}{
morekeywords={typeof, new, true, false, catch, function, return, null, catch, switch, var, if, in, while, do, else, case, break, undefined},
morecomment=[s]{/*}{*/},
morecomment=[l]//,
morestring=[b]",
morestring=[b]'
}

\lstdefinelanguage{HTML5}{
language=html,
sensitive=true, 
alsoletter={<>=-},
otherkeywords={
	<script,  <html>, <head>, <title>, </title>, <meta, />, </head>, <body>,
	<canvas, </canvas>, <script>, </script>, </body>, </html>, <!, html>, <style>, </style>, ><
},  
ndkeywords={
	=,
	charset=, id=, width=, height=,
	border:, transform:, -moz-transform:, transition-duration:, transition-property:, transition-timing-function:
},  
morecomment=[s]{<!--}{-->},
tag=[s]
}

\lstnewenvironment{JScript}
{\lstset{%
	backgroundcolor=\color{editorGray},
	basicstyle={\scriptsize\ttfamily},   
	frame=l,
	xleftmargin={0cm},
	xrightmargin={0cm},
	keywordstyle=\color{blue}\ttfamily,
	commentstyle=\color{darkgray}\ttfamily,
	ndkeywordstyle=\color{editorGreen}\ttfamily,
	stringstyle=\color{editorOcher},
	language=JavaScript,
	alsodigit={.:;},
	tabsize=2,
	showtabs=false,
	showspaces=false,
	showstringspaces=false,
	columns=flexible,
	extendedchars=true,
	breaklines=true,
	breakatwhitespace=true
}}{}

\lstnewenvironment{JScriptnum}
{\lstset{%
	backgroundcolor=\color{editorGray},
	basicstyle={\scriptsize\ttfamily},   
	frame=l,
	xleftmargin={0cm},
	xrightmargin={0cm},
	numbers=left,
	stepnumber=1,
	firstnumber=1,
	numberfirstline=true,
	numberstyle = \color{darkgray}\ttfamily,
	keywordstyle=\color{blue}\ttfamily,
	commentstyle=\color{darkgray}\ttfamily,
	ndkeywordstyle=\color{editorGreen}\ttfamily,
	stringstyle=\color{editorOcher},
	language=JavaScript,
	alsodigit={.:;},
	tabsize=2,
	showtabs=false,
	showspaces=false,
	showstringspaces=false,
	columns=flexible,
	extendedchars=true,
	breaklines=true,
	breakatwhitespace=true
}}{}


\usepackage[noend,linesnumbered,commentsnumbered,ruled]{algorithm2e}

\SetCommentSty{decmt}    
\SetKwInput{KwData}{Input}    
\SetKwInput{KwResult}{Output}    
\SetKw{KwBreak}{break}    
\SetKw{KwCont}{continue}    
\SetKw{KwContinue}{continue}    
\SetNlSty{textbf}{\color{black}}{}    
\SetNlSkip{1.1em}  

\allowdisplaybreaks

\usepackage{tcolorbox}

\newcommand{\nb}[1]{
\marginpar{\scriptsize #1}
}
\newcommand\sidenote[1]{
}
\newcommand\xxx[1]{
 \textcolor{istared}{$\bullet$}\nb{\textcolor{istared}{$\bullet$~#1}}}

\newcommand{\hide}[1]{}

\usepackage{multirow}

\usepackage{float}

\usepackage{tikz}
\usetikzlibrary{matrix}
\usetikzlibrary{arrows}
\usetikzlibrary{shapes}\usetikzlibrary{automata, positioning,through,calc,backgrounds,scopes}
\usetikzlibrary{decorations.pathreplacing,angles,quotes}

\tikzset{
->, 
>=stealth, 
node distance=2cm, 
every state/.style={thick, fill=gray!10,ellipse}, 
interm state/.style={thick, fill=gray!10, draw, rectangle, inner sep=6}, 
initial text=$ $, 
}

\newcommand{\cond}[1]{\textcolor{blue}{[#1]}}

\newcommand{\logic}{genHL\xspace}

\newcommand{\ehl}{eHL\xspace}
\newcommand{\ST}{ST\xspace}
\newcommand{\STs}{{\ST}s\xspace}

\newcommand{\pass}{\textit{passive}}
\newcommand{\specF}{\textit{spec}}
\newcommand{\Free}[1]{\textit{Free}(#1)}
\newcommand{\Bound}[1]{\textit{Bound}(#1)}

\newcommand{\generatedSet}[1]{\llbracket #1 \rrbracket}
\newcommand{\interpretAtom}[3]{#1_{#3}{\generatedSet{#2}}}

\newcommand{\setTraces}{T}
\newcommand{\trSystem}{S}

\newcommand{\traceVar}{\pi}

\newcommand{\pv}[2]{{{\bf #1}(#2)}}
\newcommand{\pvF}[1]{{\bf #1}}

\newcommand{\const}[1]{\mathtt{#1}}



\newcommand{\traceAssign}{\Pi}

\newcommand{\funcSign}{\sigma}


\newcommand{\domain}{\Sigma}
\newcommand{\Prop}{\mathcal{X}}
\newcommand{\Cons}{\mathcal{C}}
\newcommand{\Proj}{\mathcal{P}}
\newcommand{\DtDomain}{\mathcal{D}}
\newcommand{\Func}{\mathcal{F}}

\newcommand{\VarTrace}{\mathcal{V}}

\newcommand{\termSym}{\$}

\newcommand{\test}{\textit{test}}

\newcommand{\noSecret}{\textit{initPub}}

\newcommand{\ISO}{ISO\xspace}

\newcommand{\LIN}{linearizability\xspace}
\newcommand{\linear}{\textit{seq}}
\newcommand{\EvPrec}{\textit{Order}}

\newcommand{\trace}{\tau}

\newcommand{\val}{v}
\newcommand{\Vals}{\domain^{\Prop}}

\newcommand{\As}{\mathop{:}}

\newcommand{\pref}{\mathop{\leq}}




\newcommand{\lt}{a}



\newcommand{\Pref}{\mathop{\precsim}}

\newcommand{\PSync}{\mathop{\leq}}

\newcommand{\Eq}{\mathop{\rotatebox[origin=c]{180}{$\cong$}}}

\newcommand{\stred}[1]{\lfloor #1 \rfloor}

\newcommand{\conc}{;}



\newcommand{\node}{\textit{node}}

\newcommand{\HC}{\mathrm{secret}}
\newcommand{\LC}{\mathrm{pub}}

\newcommand{\inVar}{\pvF{in}}
\newcommand{\outVar}{\pvF{out}}

\newcommand{\inlow}{\inVar_{\LC}}
\newcommand{\outlow}{\outVar_{\LC}}
\newcommand{\inhigh}{\inVar_{\HC}}


\newcommand{\inv}{\mathtt{inv}}
\newcommand{\res}{\mathtt{res}}

\newcommand{\pop}{\mathtt{pop}}
\newcommand{\push}{\mathtt{push}}

\newcommand{\hyper}{\mathbf{H}}
\newcommand{\hyperSet}[1]{\hyper(#1)}
\newcommand{\powerSet}[1]{\mathcal{P}(#1)}
\newcommand{\fin}{\text{fin}}

\newcommand{\boundPowerSet}[3]{\mathcal{P}^{#1}_{#2}(#3)}
\newcommand{\observ}{\mathcal{O}}
\newcommand{\goodSet}{\mathrm{Good}}
\newcommand{\badSet}{\mathrm{Bad}}
\newcommand{\traceExtend}{\leq}
\newcommand{\hyperExtend}{\preceq}

\newcommand{\tAnd}{\text{ and }}

\newcommand{\tIff}{\text{ iff }}
\newcommand{\tSt}{\text{ s.t.\ }}

\newcommand{\Def}{\stackrel{\mathclap{\text{def}}}{=}}
\newcommand{\nat}{\mathbb{N}}


\newcommand{\step}[2]{\xRightarrow{#1,#2}}

\newcommand{\rname}[1]{\textsc{~(\textcolor{istagreen}{#1})}}

\newcommand{\va}[1]{\mathit{#1}}



\newcommand{\m}{\textup{\texttt{-}}} 

\begin{document}

\maketitle

\begin{abstract}
We study the problem of monitoring at runtime whether a system fulfills a specification defined by a hyperproperty, such as linearizability or variants of non-interference. 
For this purpose, we introduce specifications with both passive and active quantification over traces.
While passive trace quantifiers range over the traces that are observed, active trace quantifiers are instantiated with \emph{generator functions}, which are part of the specification. 
Generator functions enable the monitor to construct traces that may never be observed at runtime, such as the linearizations of a concurrent trace. 
As specification language, we extend hypernode logic with trace quantifiers over generator functions (\logic) and interpret these hypernode formulas over possibly infinite domains. 
We present a corresponding monitoring algorithm, which we implemented and evaluated on a range of hyperproperties for concurrency and security applications. 
Our method enables, for the first time, the monitoring of asynchronous hyperproperties that contain alternating trace quantifiers.
\end{abstract}

\section{Introduction}
\label{sec:intro}
\emph{Monitoring} is a dynamic verification technique that uses \emph{monitors} to check that a system behaves as intended while running.
A monitor observes \emph{traces of events} and signals the user that a given specification is violated or satisfied by the system producing the traces, independent of future observations.
Monitoring is particularly useful when dealing with untrusted third-party software or when static analysis is not feasible.
A specification is \emph{monitorable} when determining its permanent violation or satisfaction is possible from finite observations.
This paper focuses on monitoring \emph{hyperproperties}, which specify relations between multiple system executions.
Importantly, we do not restrict how observations are updated: updates can extend previously observed traces or add entirely new ones.
We also enable the specification of \emph{asynchronous hyperproperties}, which establish relations between executions where events do not necessarily occur at the same time.

\emph{Observational determinism (OD)}~\cite{goguenMeseguer82}, requiring that any two system executions with the same public inputs should produce the same public outputs, is a monitorable hyperproperty.
OD is \emph{monitorable} because it asks for the \emph{non-existence} of specific finite traces: a monitor can infer the violation of OD 
as soon as it observes two executions with the same public inputs and different public outputs.
Many important hyperproperties require the \emph{existence} of specific traces, making them not monitorable because updates on observations can add new traces.
We address this issue by extending the specification of hyperproperties with an explicit distinction between \emph{passive} and \emph{active} trace quantifiers:  passive quantifiers range over the observed traces, and active quantifiers range over traces constructed by the specification through generator functions.
This distinction is crucial for increasing the universe of monitorable hyperproperties and specifying monitors to detect certain failures.

\medskip
Good candidates for runtime verification with access to constructed traces are specifications relating the monitored system to a reference model or its ideal implementation.
For example, the security property of \emph{initial-state opacity (ISO)} \cite{bryans2008opacity,opacityDES16} requires that for every execution starting at a secret state, there is another execution, starting from a public state, indistinguishable to users who can only observe public aspects of the system:
%
\begin{flalign}
\forall \traceVar \exists \traceVar' \ (\pv{label}{\traceVar'}[0] = \const{public} \wedge \pv{pub}{\traceVar}=\pv{pub}{\traceVar'}).\label{def:intro:opacity}
\end{flalign}
In the formula above, \(\pvF{label}(\traceVar')\) is a projection of the security labels in \(\traceVar'\) while \(\pv{pub}{\traceVar}\) projects public observable values in \(\traceVar\).
This formula has two passive trace quantifiers.

Suppose we want to monitor for a violation of \ISO using only the system observations.
The monitor can never flag a violation of \ISO because, at any moment, an update on the system observations can add a trace witnessing the existential quantifier. 
Conversely, if we monitor for the satisfaction of \ISO, there is no guarantee that an update will not add a trace missing its witness.
Generally, hyperproperties with at least one quantifier alternation are not monitorable \cite{monitoringHyperLTL19,stucki2021gray} when the monitor has no further knowledge of the system.

Let's assume the monitor has access to a reference model of all executions starting from public states, specified by the generator function \(\noSecret(\traceVar)\):
\begin{flalign}
\forall \traceVar \exists \traceVar'\mathop{\in} \noSecret(\traceVar) \ (\pv{label}{\traceVar'}[0] = \const{public} \wedge \pv{pub}{\traceVar}=\pv{pub}{\traceVar'}).\label{def:intro:opacity:gen}
\end{flalign}
This is an example of a formula of \emph{hypernode logic with generator functions} (\logic), introduced in this work.
In the formula above, the universal quantifier is passive while the existential is active.
While passive quantifiers refer to the system under monitoring, active quantifiers are defined by the specification, as generator functions are part of it.

We can use the active specification in (\ref{def:intro:opacity:gen}) to monitor for a violation of the passive specification in (\ref{def:intro:opacity}) that is, otherwise, not monitorable.
As \(\noSecret\) over-approximates all meaningful behaviors of the monitored system for the existentially quantified trace (i.e., it defines at least all executions starting at a public state), all systems violating (\ref{def:intro:opacity:gen}) will also not satisfy (\ref{def:intro:opacity}).
We say that \(\noSecret\) is \emph{correct} to use in (\ref{def:intro:opacity:gen}) for all systems it over-approximates.
%

Another good candidate for monitoring with constructed traces is
\emph{linearizability} \cite{LinearHerlihyWing90}, a consistency property for concurrent data structures, requiring all histories of invocations and response events to shared resources to be consistent with some sequential execution of those operations. 
In its essence, linearizability can be captured by the following hyperproperty:
\begin{flalign}
\forall \traceVar   \exists \traceVar' \ (\varphi_{\text{linear}}(\traceVar') \wedge    
  \varphi_{\text{equiv}}(\traceVar,\traceVar') \wedge 
  \EvPrec(\traceVar) \mathop{\subseteq}  \EvPrec(\traceVar')).
\label{def:intro:linear}
\end{flalign}
The formula above universally quantifies over the observed call history of the shared object operations and requires the existence of an observed execution witnessing an equivalent linear history.
The inner formula \(\varphi_{\text{linear}}(\traceVar')\) specifies the linear requirement on the operations' calls, while 
\(\varphi_{\text{equiv}}(\traceVar,\traceVar')\) requires the existence of a complete extension of the history in \(\traceVar\) equivalent  to the linear history in \(\traceVar'\).
The final requirement in (\ref{def:intro:linear}) is on the precedence order, \(\EvPrec\), between events induced by each trace\footnote{An event \(e\) precedes \(e'\) in a trace (or history), if the response of \(e\) happens before the invocation of \(e'\).}: the linear trace can only extend the precedence order defined by \(\traceVar\).
The full specification of (\ref{def:intro:linear}) is given in Section \ref{sec:hyper_logic}.

As in the previous example, if we want to monitor for a violation of \LIN using only system observations, the monitor can never be sure if it will observe the missing sequential history.
On the other hand, for the satisfaction of \LIN, the monitor can never know whether a future observation will miss its witness.
In fact, while verifying for \LIN, it is irrelevant whether the implementation of a concurrent data structure exhibits linear histories because all that matters is that its histories can be successfully related to the ``idealized'' sequential implementation.
We propose adding the function \(\linear(\traceVar)\) defining all legal sequential histories from events in \(\traceVar\), to the specification:
\begin{flalign}
\forall \traceVar  \exists \traceVar'\! \mathop{\in}\! \linear(\traceVar)\ \ (\varphi_{\text{linear}}(\traceVar') \wedge 
  \varphi_{\text{equiv}}(\traceVar,\traceVar') \wedge 
  \EvPrec(\traceVar) \mathop{\subseteq} \EvPrec(\traceVar')).
\label{def:intro:linear:generated}
\end{flalign}
The specification above fits the usual definition of the monitoring problem for linearizability \cite{testingWing93,testingLowe17}, which is parameterized by the sequential specification of an abstract data type (ADT).
In this view, the generator function \(\linear\) is the sequential ADT specification.
Contrary to ISO, the passive specification of \LIN in (\ref{def:intro:linear}) is not our goal specification, as implementations of concurrent objects do not need to produce sequential histories.
Instead, our goal is to monitor for the active specification (\ref{def:intro:linear:generated}), which includes the definition of \(\linear\).
We observe that \(\linear\) specifies the requirements on the generator function to be used at runtime.
That is, at runtime any generator that over-approximates the behavior of \(\linear\) will be correct to monitor for the hyperproperty (\ref{def:intro:linear:generated}).
%

Even for monitorable hyperproperties like OD, monitoring can fail to detect problems promptly because the monitor may never observe the faulty execution.
We define OD \cite{Zdancewic03}:
\begin{flalign}
\forall \traceVar \forall \traceVar'\ 
(\pv{pub}{\traceVar}[0] = \pv{pub}{\traceVar'}[0]
\rightarrow
\pv{pub}{\traceVar} \Pref \pv{pub}{\traceVar'})
\label{intro:od:stuttering}
\end{flalign}
where the relation \(\Pref\) removes stuttering before comparing words for prefixing.
%
%
Here, we can add to the specification the function
\(\test\) that tests a copy of the system for all outcomes of using the current execution public input with any possible secret output:
\begin{flalign}
\forall \traceVar \forall \traceVar' \mathop{\in} \test(\traceVar)\ 
(\pv{pub}{\traceVar}[0] = \pv{pub}{\traceVar'}[0] \rightarrow
\pv{pub}{\traceVar} \Pref \pv{pub}{\traceVar'}).
\label{intro:od:func:test}
\end{flalign}
If (\ref{intro:od:func:test}) does not hold, then also (\ref{intro:od:stuttering}) is violated.
However, monitoring for (\ref{intro:od:func:test}) has the potential to find violations faster than monitoring for (\ref{intro:od:stuttering}), because the monitor has access not only to the observed traces, but also to the traces constructed by $\test$. 
Here, the generator \(\test\), which under-approximates the monitored system,
also defines a correct generator.
In our three examples, we presented three simple correct generators: over- and under-approximations of the monitored system or the generator function in the specification. 
In general, proving that the generator is correct may not be so simple.
This can be mitigated by knowledge about the monitored system or by using well-established techniques to over- or under-approximate systems, such as abstract interpretation.
Whether we use generators that are correct by design (as in our examples), or proved to be correct, our results guarantee that all models of the specification using them are also models of the specification without them.

\medskip
The main contribution of this work is introducing a new approach for monitoring hyperproperties based on an explicit distinction, at the specification level, between trace variables interpreted over (passive) observations of the system and trace variables ranging over traces constructed by generator functions.
We define correctness criteria for generators which guarantee that all systems satisfying (resp.\ violating) an active specification with correct generators will also satisfy (resp.\ violate) the intended
specification, such as the corresponding passive specification.
Generators are provided by the user and can either be defined manually or, in some cases, created automatically.
They can be used to fine-tune specifications to the monitored system: the fewer traces they generate, the more effective and efficient the monitoring.
But independently of performance, the quantifier-free part of the specification ensures that the specification is satisfied (resp.\ violated) for any correct choice of generator.
Writing specifications with ``good'' generators is similar to writing certificate or witness-based specifications in other domains, such as loop invariants:
the considerable degrees of freedom should be used to simplify the resulting proof obligations which, in our case, are the correctness condition and the monitor.

To showcase our approach, we introduce \logic along with a monitoring algorithm.
%
%
Specifically, we show, for the first time, how asynchronous hyperproperties with quantifier alternation can be monitored with the help of generator functions.
%
%


    

\section{Monitoring Hyperproperties}
\label{sec:pre}

We revisit the problem of monitoring hyperproperties, introducing definitions and notation.

\medskip
%
Let \(\Prop\) be a finite set of \emph{system variables} over the domain \(\domain \mathop{\cup} \{\termSym\}\), denoted \(\domain_\termSym\), where \(\termSym\) is the special termination symbol not occurring in \(\domain\).
A \emph{valuation}, \(\val\), is a mapping from system variables to their domain; that is, \(\val\As \Prop \rightarrow \domain_{\termSym}\), where \(\val(x)\) returns \(\termSym\) only when the trace is terminated.
The set of all possible valuations is \(\Vals\).
We represent system executions as finite or infinite sequences of valuations, which may be terminated with \(\termSym\), and refer to it as \emph{traces}.
A finite trace is \(\trace \mathop{\in} (\Vals)^* \{\termSym\}^*\) while an infinite trace is \(\trace \mathop{\in} (\Vals)^\omega\). 
We refer to finite traces ending with \(\termSym\) as \emph{terminated traces}.
We denote by \(\pv{x}{\trace}\) the projection of values assigned to the variable \(\pvF{x}\) in trace \(\trace\).

Let \(u\) be a finite trace and \(v\) be a finite or infinite trace.
We define their \emph{concatenation} as \(u \conc v\), also written as \(uv\).
For a trace \(\trace \mathop{=} uv\),  \(u\) is a \emph{prefix of} \(\trace\), denoted \(u \pref \trace\), and \(v\) is a \emph{suffix of}  \(\trace\).
Note that, for a terminated trace \(u\), \(u \pref\trace\) iff 
\(\trace \mathop{\in} \{u\}\{\termSym\}^*\).
For a trace \(\trace\mathop{=}\val_0\ldots \val_n\) and two natural numbers \(i,j\mathop{\in}\nat\), if \(i\leq j\) and \(j\leq n\), the \emph{slicing of \(\trace\) between \(i\) and \(j\) is} \(\trace[i\As j]=\val_i\ldots\val_j\), and, otherwise, \(\trace[i\As j]\mathop{=}\epsilon\).
%
%

%


Let \(\setTraces\) be a set of traces.
We denote the set of its subsets as 
\(\powerSet{\setTraces}\), while the set of all of its finite subsets with finite traces is
\(\boundPowerSet{}{\fin}{\setTraces}\!=\!\{U\,|\, U\mathop{\subseteq} T, {|U|  \mathop{\in} \nat} \tAnd \forall \trace \mathop{\in} U\ |\trace| \mathop{\in} \nat\}\).
A \emph{(trace) property} \(\setTraces\) is a set of traces (for finite traces, \(\setTraces \mathop{\in} \powerSet{(\domain^{\Prop}_{\termSym})^*}\)), while a \emph{hyperproperty} \(\hyper\) defines a set of sets of traces  (for finite traces, \(\hyper \mathop{\subseteq} \powerSet{(\domain^{\Prop}_{\termSym})^*}\)).
We represent a system by the set of all of its possible executions, i.e., a system defines a set of traces.
%
%
The system \(\trSystem \mathop{\in} \powerSet{(\Vals)^\omega}\)) satisfies the hyperproperty \(\hyper\) iff it is one of the systems allowed by \(\hyper\), i.e., \(\trSystem\mathop{\in}\hyper\).
Observations of a system are finite sets of finite traces, i.e., \(\observ\mathop{\in}\boundPowerSet{}{\fin}{(\Vals)^*\{\termSym\}^*}\).

In the seminal work on hyperproperties \cite{ClarksonS10}, the authors lift the notion of trace prefix to prefix of sets of traces, 
defining that the set of traces \(\setTraces\) is a prefix of \(\setTraces'\), written  \({\setTraces \hyperExtend \setTraces'}\), iff
\(\forall \trace \mathop{\in} \setTraces\, \exists \trace'\mathop{\in} \setTraces' \ \trace \mathop{\traceExtend} \trace'\).
Using this definition, the set of observations that \emph{permanently satisfy or violate a given hyperproperty \(\hyper\) over \(\Vals\)} is characterized by the following sets of good and bad prefixes~\cite{monitorHyper17}, respectively:
{
\begin{align*}
\goodSet^{}(\hyper)\! = &\{\observ\mathop{\in} \boundPowerSet{}{\fin}{(\Vals)^*\{\termSym\}^*} \, |\, \forall \mathcal{V} \mathop{\in} \boundPowerSet{}{}{(\Vals)^\omega}\, \observ \hyperExtend  \mathcal{V} \mathop{\rightarrow} \mathcal{V} \mathop{\in} \hyper\}&&\\
\badSet^{}(\hyper)\! = &\{\observ\mathop{\in} \boundPowerSet{}{\fin}{(\Vals)^*\{\termSym\}^*} \, |\,
\forall \mathcal{V} \mathop{\in} \boundPowerSet{}{}{(\Vals)^\omega}\, \observ \hyperExtend  \mathcal{V} \mathop{\rightarrow} \mathcal{V} \mathop{\notin} \hyper\}.
\end{align*}
}

The \emph{monitoring problem} asks whether a system observation satisfies or violates a given hyperproperty.

\begin{tcolorbox}[title=Monitoring Problem]
Let \(\Prop\) be a set of system variables with domain 
\(\domain\).
Let  
$\hyper$ be a hyperproperty over $\Vals$
and  \(\observ\mathop{\in}\boundPowerSet{}{\fin}{(\Vals)^*\{\termSym\}^*}\) be an observation.
Does \(\observ\) permanently satisfy or violate~\(\hyper\)?
\end{tcolorbox}

\noindent 
When the hyperproperty \(\hyper\) is defined over sets of finite traces, we restrict observation's extensions accordingly, as shown in the definition of bad prefixes~below:
{
\begin{align*}
\badSet_{\text{fin}}(\hyper)\! = &\{\observ\mathop{\in} \boundPowerSet{}{\fin}{(\Vals)^*\{\termSym\}^*} \, |\,
\forall \mathcal{V} \mathop{\in} \boundPowerSet{}{}{(\Vals)^*\{\termSym\}^*}\, \observ \hyperExtend  \mathcal{V} \mathop{\rightarrow} \mathcal{V} \mathop{\notin} \hyper\}.
\end{align*}
}
\noindent To simplify notation, we use \(\goodSet_{}(\hyper)\) and \(\badSet_{}(\hyper)\) for both the finite and the infinite case.

%
%
%
A property is \emph{monitorable} when it is possible to extend any observation to a state where we can solve the monitoring problem.
Clearly, monitoring is only effective for such properties.

\begin{definition}
\label{def:mon}
Let \(\hyper\) be a hyperproperty over \(\Vals\).
\(\hyper\) is \emph{monitorable}, iff  all observations
\(\observ\mathop{\in}\boundPowerSet{}{\fin}{(\Vals)^*\{\termSym\}^*}\) have an extension, \(\observ' \mathop{\in} \boundPowerSet{}{\fin}{(\Vals)^*\{\termSym\}^*}\) with \(\observ \hyperExtend \observ'\), that is either in the good or in the bad set of \(\hyper\),
\(\observ' \mathop{\in}\goodSet^{}(\hyper) \mathop{\cup}\badSet^{}(\hyper)\).
\end{definition}

\begin{example}
Let \(\varphi_{\textit{OD}}\) be as specified in (\ref{intro:od:stuttering}), and consider the following observed traces with the same public inputs but different outputs:
\(
\trace_0\mathop{=} \{\pvF{\inlow}\!\As \text{\ttfamily 1}, \pvF{\inhigh}\!\As \text{\ttfamily 1},\pvF{\outlow}\!\As \text{\ttfamily 1}\}\) and \(\trace_1\mathop{=}  \{\pvF{\inlow}\!\As \text{\ttfamily 1}, \pvF{\inhigh}\!\As \text{\ttfamily 0},\pvF{\outlow}\!\As \text{\ttfamily 0}\}.\)
Any observation \(\observ\) extended with these two traces,  \(\observ'\mathop{=}\observ\mathop{\cup}\{\trace_0,\trace_1\}\),  witnesses a permanent violation of \(\varphi_{\textit{OD}}\); that is, \(\observ'\mathop{\in}\badSet^{}(\varphi_{\textit{OD}})\). 
\end{example}



An interesting class of monitorable hyperproperties are \(k\)-safety properties, 
where violations can be detected from observations with at most \(k\) traces.
If \(\hyper\) is a \(k\)-safety hyperproperty, then it is monitorable.
Observational determinism is a 2-safety property.
There are, however, many relevant properties that are not monitorable according to Def.~\ref{def:mon} because they
%
require, for example, all executions to be related to some other execution (see \cite{monitoringHyperLTL19,stucki2021gray}).
%




We introduce monitors and their correctness criteria for hyperproperty specifications.
\begin{definition}
A monitor \(M\) is a (partial) function from observations to a boolean value.
\(M\) is \emph{sound} for the hyperproperty \(\hyper\) iff for all observations \(\observ\):
\begin{itemize}
    \item if \(M(\observ) = 1\), then \(\observ\mathop{\in} \goodSet^{}(\hyper)\); and
    \item if \(M(\observ) = 0\), then \(\observ\mathop{\in} \badSet^{}(\hyper)\).
\end{itemize}
A monitor \(M\) is \emph{complete} for  \(\hyper\) iff for all observations \(\observ\):
\begin{itemize}
    \item if \(\observ\mathop{\in} \goodSet^{}(\hyper)\), then \(M(\observ) = 1\); and
    \item if \(\observ\mathop{\in} \badSet^{}(\hyper)\), then \(M(\observ) = 0\).
\end{itemize}
\end{definition}


%

\section{Hypernode Logic with Generator Functions}
\label{sec:hyper_logic}

\emph{Hypernode logic}, introduced in~\cite{BartocciHNC23} and extended in~\cite{hypernodeMonitor24}, is a logic to specify asynchronous hyperproperties using regular expressions, stutter reduction and prefixing to compare traces.
We extend it to support infinite domains and traces constructed by generator functions. 

\medskip
\emph{Hypernode logic with generator functions} (\logic) formulas are 
defined by the grammar:
\begin{flalign}
\psi ::=&\  \epsilon \, |\, \const{c} \,|\, \pv{p}{\traceVar}\,|\, \psi[i:i]\, |\, \psi\conc \psi \, |\, \psi + \psi \, |\, \psi^*\, |\, \stred{\psi} &&
\label{def:syntax:whn}
\\
\varphi ::=&\ \exists \traceVar\, \varphi \, |\, \exists \traceVar \mathop{\in} f(\traceVar, \ldots, \traceVar)\ \varphi \,| \, \neg \varphi\, |\, \varphi \wedge \varphi \, |\, \psi \PSync \psi \notag
\end{flalign}
\noindent
where \(\traceVar\) is a trace variable from a set of trace variables \(\VarTrace\),
\(\const{c}\) is a constant from a finite set of constants \(\Cons\),
\(\pvF{p}\) is is a projection function from a set \(\Proj\),
\(i\) is an integer
and
\(f\) is a generator function symbol from a set of generators signatures \(\Func\).
%


%

We refer to the formulas defined by \(\psi\) in (\ref{def:syntax:whn}) as \emph{trace formulas}.
%
As for the quantified part, there are two types of quantifiers:
\emph{passive quantifiers} ranging over observed trace variables -- \(\exists \traceVar\ \varphi\) --, and
\emph{active quantifiers} ranging over traces constructed by a generator -- \(\exists \traceVar \mathop{\in} f(\traceVar_1, \ldots, \traceVar_n)\ \varphi\).
A trace variable is \emph{bound} if it is quantified.
The set of all bound variables is defined inductively as \(\Bound{\mathbb{Q}\traceVar \varphi}\mathop{=}\{\traceVar\}\mathop{\cup}\Bound{\varphi}\) where \(\mathbb{Q} \traceVar\mathop{\in}\{\exists \traceVar, \exists \traceVar \mathop{\in}f(\traceVar_1, \ldots, \traceVar_n)\}\) and otherwise,  \(\Bound{\varphi}\) is the same as \(\varphi\) sub-formulas (starting from the empty set).
The set of all free variables (not bound) of \(\varphi\) is denoted by \(\Free{\varphi}\).
Without loss of generality, we assume that trace variables are quantified only once in a formula.
We also assume generators are not defined over bound variables; that is, 
for \(\exists \traceVar\mathop{\in}f(\traceVar_1, \ldots, \traceVar_n) \ \varphi\), \(\{\traceVar_1, \ldots, \traceVar_n\}\mathop{\cap} \textit{Bound}(\varphi)\mathop{=}\emptyset\).

%
We interpret hypernode formulas over \emph{data domains} \(\DtDomain\), which are
%
pairs with the domain for the system traces -- \(\Vals\) --, and a set of projection functions from that domain to  possibly other domains; that is, 
\(\DtDomain\mathop{=}(\Vals, \{\pvF{p}\As \Vals \rightarrow \domain_{\pvF{p}}\}_{\pvF{p}\mathop{\in}\Proj})\).
The domain of \(\DtDomain\) is: \(\domain_{\DtDomain}\mathop{=}\domain \cup (\bigcup_{\pvF{p}\mathop{\in}\Proj} \domain_{\pvF{p}})\).
All data domains include the projection of system variables \(\pvF{x}\mathop{\in}\Prop\).
Letter projections generalize, as usual, to projection over letter sequences.

\begin{example}
\sidenote{Review and improve.}
We define the data domain for observable events of concurrent objects.
Each event in the trace (or history) records its type (\(\pvF{tp}\), that is either an invocation \(\inv\) or a response \(\res\)), the calling process (\(\pvF{proc}\)) from a set of processes \(\text{Proc}\), the operation name (\(\pvF{op}\)), and the parameters (\(\pvF{param}\)).
For each variable we consider the following domains, assuming a finite number of processes and observable operations being only \(\push\) and \(\pop\):
\(\pvF{tp}\mathop{\in} \{\inv,\res\}\), 
\(\pvF{proc} \mathop{\in} \{p_1, \ldots, p_n\}\),
\(\pvF{op} \mathop{\in} \{\push, \pop\}\) and 
\(\pvF{params} \mathop{\in} \nat\).
We also define the following projections, assuming that the object is a queue and \(\textit{Proc}\) is finite:
\begin{itemize}
    \item the full event: \(\pvF{ev}(\val)=\val\);
    \item events called by \(p\mathop{\in}\textit{Proc}\):  \(\pvF{proc}_p(\val)=\val\) if \(\pvF{proc}(\val)=p\), and, otherwise, \(\pvF{proc}_p(\val)=\epsilon\);
    \item responses' values:
    \(\!\!\pvF{res}(\val)\!\mathop{=}(\pv{proc}{\val}, \pv{op}{\val},\pv{param}{\val})\) if \(\pv{tp}{\val}\!\mathop{=}\!\res\), otherwise, \(\pvF{res}(\val)\!\mathop{=}\!\epsilon\);
    \item invocations' values:
    \(\!\!\pvF{inv}(\val)\!\mathop{=}(\pv{proc}{\val}, \pv{op}{\val},\pv{param}{\val})\) if \(\pv{tp}{\val}\!\mathop{=}\!\inv\), otherwise, \(\pvF{inv}(\val)\!\mathop{=}\!\epsilon\).
\end{itemize}
\end{example}

The slicing operator, \(\psi[i\As j]\), is defined over integers, allowing indices to be specified relative to the end of the trace.
For a concrete trace and a slicing over possibly negative numbers, we derive its slicing over naturals as:
\(\|\trace[i\As j]\| = \trace[i_{\trace}, j_{\trace}]\) where \(i_{\trace} = |\trace|+i\), if \(i<0\), and otherwise \(i_{\trace}=i\), and the same for \(j_{\trace}\).
We also use the abbreviation \(\trace[i]\Def \trace[i:i]\).
For example, for a trace \(\trace\mathop{=}\val_0 \val_1 \val_2\),
\(\|\trace[\m 2]\| = \|\trace[\m 2\As\m 2]\| = \trace[1\As 1] = \val_1.\)
The operators related to \emph{regular expressions}, i.e., sequential composition (\(\psi \conc \psi\)), union (\(\psi\mathop{+}\psi\)) and Kleene star (\(\psi^*\)), have the usual interpretation.
When clear, we write \(\psi \psi\) instead of $\psi \conc \psi$.
%
%
%
%
%
%
%
%
%
We interpret trace formulas over assignments of trace variables to traces over a data domain\footnote{We assume that all constants in the formula are elements of the data domain: \(\Cons \mathop{\subseteq} \domain_{\DtDomain}\).} \(\DtDomain\), \(\traceAssign_{\DtDomain}\As \VarTrace \mathop{\rightarrow} \domain_{\DtDomain}^*\):
\begin{align*}
&\interpretAtom{\traceAssign}{\const{c}}{\DtDomain} = \{\const{c}\}
\hspace{12mm}
\interpretAtom{\traceAssign}{\epsilon}{\DtDomain} = \{\epsilon\}
\hspace{12mm}
\interpretAtom{\traceAssign}{\pv{p}{\traceVar}}{\DtDomain} = \{\pv{p}{\traceAssign_{\DtDomain}(\traceVar)}\}
&\\
&
\interpretAtom{\traceAssign}{\psi[i\As j]}{\DtDomain} = \{\|\trace[i\As j]\| \,|\,\trace \mathop{\in}\interpretAtom{\traceAssign}{\psi}{\DtDomain}\}
\hspace{7mm}
\interpretAtom{\traceAssign}{\psi \conc \psi'}{\DtDomain} = \interpretAtom{\traceAssign}{\psi}{\DtDomain} \conc \interpretAtom{\traceAssign}{\psi'}{\DtDomain}
\\
&
\interpretAtom{\traceAssign}{\psi + \psi'}{\DtDomain} = \interpretAtom{\traceAssign}{\psi}{\DtDomain} \cup \interpretAtom{\traceAssign}{\psi'}{\DtDomain}
\hspace{15mm}
\interpretAtom{\traceAssign}{\psi^*}{\DtDomain} = 
\bigcup_{n\in\nat}\interpretAtom{\traceAssign}{\psi}{\DtDomain}^n 
\\
&
\interpretAtom{\traceAssign}{\stred{\psi}}{\DtDomain}\! =\! 
\{  \lt_1\cdots \lt_k |\, \lt_1^+\cdots \lt_k^+  \mathop{\in} \interpretAtom{\traceAssign}{\psi}{\DtDomain},
\lt_i \mathop{\neq} \lt_{i+1},i \mathop{<} k \}&
\end{align*}
where \(\pvF{p}\) is defined in \(\DtDomain\), \(S \conc S' \mathop{=} \{ s\conc s'\, |\, s \mathop{\in} S, s' \mathop{\in} S'  \}\) and $S^n \mathop{=} S\conc S^{n-1}$ with $S^0 \mathop{=} \{\epsilon\}$.

\medskip
%
%
We define the stutter-prefix predicate \(\Pref\) by combining stutter reduction and synchronous prefixing: 
\(\psi_1 \Pref \psi_2 \Def \stred{\psi_1} \PSync\ \stred{\psi_2} \).
We define $\psi^+$ as $\psi\psi^*$,
 \(\pv{x}{\traceVar} \mathop{=} \pv{y}{\traceVar'} \Def \pv{x}{\traceVar} \le \pv{y}{\traceVar'} \wedge \pv{y}{\traceVar'} \le \pv{x}{\traceVar}\), and
$\pv{x}{\traceVar} \Eq \pv{y}{\traceVar'} \Def \stred{\pv{x}{\traceVar}} = \stred{\pv{y}{\traceVar'}}$.
As usual, we define universal quantifiers from existential ones, for example, 
\(\forall \traceVar\ \varphi \Def \neg \exists\traceVar \neg \varphi\).
%
%
While it is clear what it means for a trace to be a prefix of another trace, for \logic, we interpret the prefixing (\(\PSync\)) over sets of traces because the interpretation of trace formulas returns a set of traces.
We follow the same approach as in \cite{hypernodeMonitor24} and 
quantify existentially over the elements of the sets of traces.

%
A generators interpretation over \(\DtDomain\) is \(\funcSign=(\domain_{\DtDomain}, \{f^{\funcSign}\As \domain_{\DtDomain}^* \mathop{\times} \cdots \mathop{\times} \domain_{\DtDomain}^* \rightarrow \boundPowerSet{}{}{\domain_{\DtDomain}^*}\}_{f\in \Func})\) 
where \(\Func\) is a set of generators signatures.
We denote by \(f(S)\) the range of the generator \(f\) when all arguments have domain \(S\).
We define inductively when a set of finite traces \(\setTraces\),
an assignment $\traceAssign_{\DtDomain}$ and a generators interpretation \(\funcSign\), all over a data domain \(\DtDomain\), satisfy a \logic formula:
\[
\begin{split}
&(\traceAssign_{\DtDomain}, {\setTraces},\funcSign) \models \exists \traceVar \varphi 
\tIff 
\text{exists }
\trace \mathop{\in} \setTraces:\ (\traceAssign_{\DtDomain}[\traceVar \mapsto \trace], {\setTraces}, \funcSign) \models \varphi;\\
&(\traceAssign_{\DtDomain}, {\setTraces},\funcSign) \models \exists \traceVar \mathop{\in} f(\traceVar_1, \ldots, \traceVar_n)\ \varphi 
\tIff
\text{exists }
\trace \mathop{\in} f^{\funcSign}(\traceAssign(\traceVar_1), \ldots, \traceAssign(\traceVar_n)):(\traceAssign_{\DtDomain}[\traceVar \mapsto \trace], {\setTraces}, \funcSign)  \models \varphi;\\
&(\traceAssign_{\DtDomain}, {\setTraces},\funcSign)  \models  \varphi_1 \wedge \varphi_2 
\tIff 
(\traceAssign_{\DtDomain}, {\setTraces},\funcSign)  \models  \varphi_1 \tAnd  (\traceAssign_{\DtDomain}, {\setTraces},\funcSign)  \models   \varphi_2; \\
&(\traceAssign_{\DtDomain}, {\setTraces},\funcSign)  \models  \neg  \varphi_1
\tIff 
(\traceAssign_{\DtDomain}, {\setTraces},\funcSign)  \not\models  \varphi_1 ;\\
&(\traceAssign_{\DtDomain}, {\setTraces},\funcSign)  \models  \varphi_1 \le  \varphi_2
\tIff 
\exists w_1 \mathop{\in}\interpretAtom{\traceAssign}{\varphi_1}{\DtDomain}\
\exists w_2 \mathop{\in} \interpretAtom{\traceAssign}{\varphi_2}{\DtDomain} \tSt
w_1 \pref
w_2;
\end{split}
\]
where $w_1 \mathop{\le} w_2$ is the classical prefixing relation on words, and \(\traceAssign_{\DtDomain}[\traceVar\mathop{\mapsto}\trace]\) defines an update of the assignment where the value of \(\traceVar\) changes to \(\trace\) and, otherwise, remains the same.
A set of traces \(\setTraces\) and a generators interpretation \(\funcSign\) satisfy a \logic formula \(\varphi\), denoted \((\setTraces,\funcSign) \models \varphi\) if there exists an assignment \(\traceAssign_{\DtDomain}\) s.t.\ \((\traceAssign_{\DtDomain}, \setTraces, \funcSign)\models \varphi\).
For a hypernode formula with no active quantifiers, \(\varphi\), we don't need generators interpretations and may write simply \(\setTraces\models\varphi\).
We denote by \(\hyperSet{\varphi, \funcSign} \mathop{=} \{\setTraces\conc  \{\termSym\}\ | \ (\setTraces,\funcSign) \models \varphi\}\) the hyperproperty defined \(\varphi\) where the generator function interpretation is fixed by  \(\funcSign\) and all traces are terminated.

\medskip

\begin{example} We specify linearizability, as  in \cite{LinearHerlihyWing90}, using hypernode logic with projections defined in the previous example.
%
%
%
We assume events to be uniquely defined by their values.
In a nutshell, for all observed histories of the system, there must exist a legal sequential history (specified by \(\linear\)) such that (i) there exists an extension of the observed history (only with response events) equivalent at the process level with the legal sequential history; and (ii) the \emph{happens-before order} defined by the observed history is preserved by the sequential one.
An event \(e_0\) happens-before \(e_1\) in a history, if \(e_0\)'s response happens before \(e_1\)'s invocation.
%
\begin{flalign}
\forall \traceVar \exists \traceVar_s \mathop{\in} \linear(\traceVar)\ 
\varphi_{\textit{linear}}(\traceVar_s) \wedge \varphi_{\textit{equiv}}(\traceVar,\traceVar_s) \wedge \varphi_{\textit{Order}}(\traceVar,\traceVar_s) )
\label{def:lin:hypernode}
\end{flalign}
\vspace{-5mm}
{\small
\begin{align*}
\small
&\varphi_{\textit{linear}}(\traceVar_s) \Def \ 
\pv{tp}{\traceVar_s} \PSync (\inv\conc \res)^* 
\wedge  \pv{res}{\traceVar_s} \PSync \pv{inv}{\traceVar_s}\\
&\varphi_{\textit{equiv}}(\traceVar,\traceVar_s) \Def \ 
\exists \traceVar'\! \mathop{\in}\text{ext}(\traceVar) \ \pv{ev}{\traceVar}\!\PSync\! \pv{ev}{\traceVar'} 
\wedge \pv{tp}{\traceVar'}\!\PSync\! \pv{tp}{\traceVar} \conc \res^* 
\wedge 
\!\!\!\!
\bigwedge\limits_{p\mathop{\in}\text{Proc}}\!\!\!\! \pvF{proc}_p(\traceVar') = \pvF{proc}_p(\traceVar_s)\\
&\varphi_{\textit{Order}}(\traceVar,\traceVar_s) \Def \  
\forall \traceVar^{\inv} \mathop{\in} \text{sub}(\traceVar)\ 
\big( (\pv{ev}{\traceVar^{\inv}} \PSync \pv{ev}{\traceVar}  \wedge \pv{type}{\traceVar^{\inv}}[\m 1] = \inv)
\rightarrow \\
& \hspace{5mm}
\exists \traceVar^{\inv}_s \mathop{\in} \text{sub}(\traceVar_s)\ 
\pv{ev}{\traceVar^{\inv}_s} \PSync \pv{ev}{\traceVar_s}\ 
\wedge \pv{inv}{\traceVar_{\inv}}[\m 1]=\pv{inv}{\traceVar_s^{\inv}}[\m 1]\ 
\wedge \varphi_{orderRes}(\traceVar^{\inv}, \traceVar_s^{\inv})
\big)\\
&\varphi_{\textit{orderRes}}(\traceVar,\traceVar_s) \Def \  
\forall \traceVar'\mathop{\in} \text{sub}(\traceVar)\ 
\big(\pv{res}{\traceVar'} \PSync \pv{res}{\traceVar} \rightarrow \\
& \hspace{30mm}
\exists \traceVar'_s \mathop{\in} \text{sub}(\traceVar_s)\ 
\pv{res}{\traceVar_s'} \PSync \pv{res}{\traceVar_s}\ 
\wedge \pv{res}{\traceVar_h'}[\m 1]=\pv{res}{\traceVar_s'}[\m 1]\ 
\big)\Big)
\end{align*}
}
where \(\textit{ext}\) is a generator extending traces, and \(\textit{sub}\) is a generator giving all prefix of a trace.
Note that the first part of \(\varphi_{\textit{Order}}\) finds, for each invocation in the observed traces (represented by the prefix \(\traceVar_{\inv}\) ending at that invocating), the prefix of the sequential trace ending in the same invocation (\(\traceVar_{s}^{\inv}\)).
While, \(\varphi_{\textit{orderRes}}(\traceVar_{\inv},\traceVar_s)\) ensures that all responses in \(\traceVar_{\inv}\) (i.e., that happened before the invocation) are also in \(\traceVar^{\inv}_s\).

An alternative representation of observations of concurrent shared objects, 
represents operations with start and end time.
In this case, we want to find a linear order of operations (e.g, operations do not overlap) and a legal sequential history (as specified by an ADT) such that 
the abstract representation of the linear history is the same as the sequential history:
\(
\forall \traceVar \exists \traceVar_l \mathop{\in} \text{linear}(\traceVar)\  \exists \traceVar_s \mathop{\in} \linear(\traceVar)\   \pv{abstract}{\traceVar_l} = \pv{ev}{\traceVar_s}\),
where \(\pvF{abstract}\) projects events in the linear trace as abstract events in the ADT used to define legal sequential histories.
\end{example}

\subsection{Correctness Criteria for Active Specifications}
\label{sec:act_to_passive}

We present correctness criteria for specifications with active quantifiers based on how their generators relate to the target system or other generators used as specification.
Intuitively, generators for existential quantifiers must under-approximate the specified behavior, while generators for universal quantifiers must over-approximate it.

\begin{definition}
Let \(\funcSign\) be a generators interpretation, 
\(\setTraces\) a set of traces,
\(\specF/n\) be a (specification) function
and
 \(\varphi\) a hypernode formula
where \(\{\traceVar_1, \ldots, \traceVar_n\}\mathop{\notin}\Bound{\varphi}\).
A generator function \(f/n\) in \(\funcSign\) is \emph{correct for \(\exists \traceVar\mathop{\in}f(\traceVar_1, \ldots, \traceVar_n)\ \varphi\), the set \(\setTraces\) and function \(\specF\)}  iff:
\begin{itemize}
    \item for all assignments \(\traceAssign\) defined for  \(\traceVar_1, \ldots, \traceVar_n\) and all 
\(\trace_f\mathop{\in}f(\traceAssign(\traceVar_1), \ldots, \traceAssign(\traceVar_n))\), 
 exists \(\trace_{s}\mathop{\in}\specF(\traceAssign(\traceVar_1), \ldots, \traceAssign(\traceVar_n))\) s.t.\ 
if \((\traceAssign[\traceVar \mapsto \trace_f],\setTraces, \funcSign)\models \varphi\) then
\(({\traceAssign[\traceVar \mapsto \trace_{s}]}, \setTraces, \funcSign)\models \varphi\).
\end{itemize}
A generator function \(f/n\) in \(\funcSign\) is \emph{correct for \(\forall \traceVar\mathop{\in}f(\traceVar_1, \ldots, \traceVar_n)\ \varphi\),  \(\setTraces\) and  \(\specF\)}  iff:
\begin{itemize}
    \item for all assignments \(\traceAssign\) defined for \(\traceVar_1, \ldots, \traceVar_n\), and all 
    \(\trace_{s}\mathop{\in}\specF(\traceAssign(\traceVar_1), \ldots, \traceAssign(\traceVar_n))\), exists \(\trace_f\mathop{\in}f(\traceAssign(\traceVar_1), \ldots, \traceAssign(\traceVar_n))\) s.t.\
if \((\traceAssign[\traceVar \mapsto \trace_{s}], \setTraces, \funcSign)\models \varphi\) then
\((\traceAssign[\traceVar \mapsto \trace_{f}], \setTraces, \funcSign)\models \varphi\).
\end{itemize}
\end{definition}

Given a hyperproperty specified by a hypernode formula \(\varphi\) in prenex normal form (i.e., all quantifiers are at the beginning of the formula), a set of traces \(\setTraces\) and a generator interpretation \(\funcSign_{\specF}\), the \emph{generator interpretation \(\funcSign'\) is correct for \(\hyperSet{\varphi, \funcSign_{\specF}}\) and \(\setTraces\)} iff for all subformulas \(\varphi'\) of \(\varphi\):
\begin{itemize}
    \item if \(\varphi'\mathop{=}\mathbb{Q} \traceVar\mathop{\in}f(\traceVar_1, \ldots, \traceVar_n)\ \varphi''\), with \(\mathbb{Q}\mathop{\in}\{\forall, \exists\}\), and \(f\mathop{\notin}\funcSign_{\specF}\), then \(f\) is correct for \(\varphi'\), the set of traces \(\setTraces\) and the function \(f_{\setTraces}\) that always returns \(\setTraces\);
    \item if \(\varphi'\mathop{=}\mathbb{Q} \traceVar\mathop{\in}f(\traceVar_1, \ldots, \traceVar_n)\ \varphi''\), with \(\mathbb{Q}\mathop{\in}\{\forall, \exists\}\),  and \(f\mathop{\in}\funcSign_{\specF}\), then \(f\) is correct for \(\varphi'\), the set of traces \(\setTraces\) and the interpretation of \(f\) in \(\funcSign_{\specF}\);
    \item and, otherwise, there are no requirements on \(\varphi'\).
\end{itemize}

The proposition below follows directly from the definitions above and proves the intuition that under- and over-approximations of the monitored system define correct generators.

\begin{proposition}
Let \(\varphi\) be a hypernode formula in prenex normal form and \(\setTraces\) a set of traces over the data domain \(\DtDomain\).
If \(\funcSign\) is an interpretation 
for generators in \(\varphi\) s.t.\ all generators (\(f_{\exists}\)) used in existential quantifiers are under-approximations of \(\setTraces\) (i.e., \(f_{\exists}(\domain_\DtDomain^*) \mathop{\subseteq} \setTraces\)) while all generators used in universal quantifiers (\(f_{\forall}\)) are over-approximations of \(\setTraces\) (i.e., \(\setTraces \mathop{\subseteq} f_{\forall}(\domain_\DtDomain^*)\)), then \(\funcSign\) is correct for \(\hyperSet{\varphi,\funcSign}\) and \(\setTraces\).
\end{proposition}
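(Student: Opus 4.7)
The plan is to unfold the definition of a correct generator interpretation and discharge each generator-bound subformula of $\varphi$ directly from the approximation hypotheses. Because $\varphi$ is in prenex normal form, every generator appears in exactly one subformula of the shape $\mathbb{Q}\traceVar\in f(\traceVar_1,\ldots,\traceVar_n)\,\varphi''$ with $\mathbb{Q}\in\{\forall,\exists\}$, and it suffices to verify the corresponding clause of the definition pointwise over these subformulas.

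For an existential subformula $\exists\traceVar\in f_{\exists}(\ldots)\,\varphi''$, I would fix an arbitrary assignment $\traceAssign$ on $\traceVar_1,\ldots,\traceVar_n$ and an arbitrary trace $\trace_f\in f_{\exists}(\traceAssign(\traceVar_1),\ldots,\traceAssign(\traceVar_n))$. The under-approximation hypothesis $f_{\exists}(\domain_{\DtDomain}^*)\subseteq\setTraces$ immediately gives $\trace_f\in\setTraces$, so choosing $\trace_s:=\trace_f$ yields a witness in the target set $\setTraces$ (which is exactly what the constant specification function $f_{\setTraces}$ from the definition returns), and the implication required by the definition collapses to an identity. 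The universal subformula $\forall\traceVar\in f_{\forall}(\ldots)\,\varphi''$ is handled symmetrically: for any $\trace_s\in\setTraces$, the over-approximation $\setTraces\subseteq f_{\forall}(\domain_{\DtDomain}^*)$ gives $\trace_s\in f_{\forall}(\ldots)$, so $\trace_f:=\trace_s$ is a valid witness and the implication again holds trivially.

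The hard part, such as it is, is purely bookkeeping: for each generator $f$ occurring in $\varphi$ one must match $f$ to the appropriate ``target'' $\specF$ supplied by the correctness definition. Once one recognises that this target is the constant function $f_{\setTraces}$ in each case, the two containment hypotheses in the statement are exactly what is needed to take the runtime and specification witnesses equal, reducing the implication to a tautology. No induction on the inner formula $\varphi''$ is required beyond the quantifier prefix, because the correctness definition already iterates over all generator subformulas and the choice of witness does not depend on $\varphi''$ at all.
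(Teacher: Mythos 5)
Your overall strategy is the intended one: the paper gives no written proof beyond the remark that the claim ``follows directly from the definitions'', and unfolding the correctness definition with the constant specification function $f_{\setTraces}$, then taking the specification witness equal to the generated trace (and vice versa), is exactly the argument the authors have in mind. Your existential half is sound: $\trace_f\in f_{\exists}(\traceAssign(\traceVar_1),\ldots,\traceAssign(\traceVar_n))\subseteq f_{\exists}(\domain_{\DtDomain}^*)\subseteq\setTraces$, so $\trace_s:=\trace_f$ does lie in $f_{\setTraces}(\ldots)=\setTraces$ and the required implication is a tautology.

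The universal half, however, has a gap as written. The paper defines $f(S)$ as the range of $f$ when all arguments have domain $S$, i.e.\ the union of $f_{\forall}(t_1,\ldots,t_n)$ over all argument tuples, whereas the correctness condition for $\forall\traceVar\mathop{\in}f_{\forall}(\traceVar_1,\ldots,\traceVar_n)\ \varphi''$ fixes a particular assignment $\traceAssign$ and demands a witness $\trace_f$ in $f_{\forall}(\traceAssign(\traceVar_1),\ldots,\traceAssign(\traceVar_n))$. From $\trace_s\in\setTraces\subseteq f_{\forall}(\domain_{\DtDomain}^*)$ you only know that $\trace_s$ is produced by $f_{\forall}$ for \emph{some} argument tuple, not necessarily the one at hand, so the choice $\trace_f:=\trace_s$ is not licensed; indeed, if the instantiations of $f_{\forall}$ partition $\setTraces$ (say $f_{\forall}(t_a)=\{t_a\}$ and $f_{\forall}(t)=\{t_b\}$ otherwise, with $\setTraces=\{t_a,t_b\}$) and $\varphi''$ holds for $t_b$ but not $t_a$, the correctness condition fails even though the union-level containment holds. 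To make your symmetric argument go through you must read the over-approximation hypothesis pointwise, $\setTraces\subseteq f_{\forall}(t_1,\ldots,t_n)$ for every argument tuple, and you should state that strengthening (or justify why it is the intended reading) explicitly. A smaller bookkeeping point: since every generator of $\varphi$ is interpreted in $\funcSign$, the literal reading of ``correct for $\hyperSet{\varphi,\funcSign}$'' routes each $f$ through the clause whose target is $f$ itself, which is trivially satisfied without any approximation hypotheses; your choice of $f_{\setTraces}$ as the target is the meaningful reading (the one needed to combine with Theorem~\ref{thm:correct_ative} against the passive specification), but you should say which clause of the definition you are instantiating and why.
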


The \emph{passive version of a hypernode formula \(\varphi\) concerning a generators interpretation \(\funcSign\)} removes all generators in \(\varphi\) that are not defined in \(\funcSign\); that is, if \(\varphi=\mathbb{Q}\traceVar\mathop{\in}f(\traceVar_1, \ldots, \traceVar_n))\ \varphi'\) and \(f\mathop{\notin}\funcSign\), then \(\pass(\varphi) = \mathbb{Q}\traceVar\ \varphi'\) and, otherwise, \(\pass(\varphi)=\varphi\).
We prove below that all sets of traces \(\setTraces\) that are models when interpreted over correct generators interpretations for a specification \(\hyperSet{\varphi, \funcSign}\) and \(\setTraces\), will also satisfy \(\varphi\) when generators are interpreted with \(\funcSign\).

\begin{theorem}
\label{thm:correct_ative}
Let \(\funcSign'\) be a correct generators interpretation for the specification \(\hyperSet{\varphi, \funcSign}\) and set of traces \(\setTraces\).
%
For all trace assignments \(\traceAssign\) defined for all trace variables in \(\Free{\varphi}\):\linebreak
If \((\traceAssign, \setTraces,\funcSign') \models \varphi\)
then \((\traceAssign, \setTraces,\funcSign)\models \pass(\varphi, \funcSign)\).
\sidenote{Assign. over the same functions.}
\end{theorem}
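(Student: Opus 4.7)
The plan is to proceed by structural induction on $\varphi$, taken in prenex normal form, as is implicit in the correctness definition. The induction hypothesis is that for every strict subformula $\varphi'$ of $\varphi$ and every trace assignment $\traceAssign'$ defined on $\Free{\varphi'}$, $(\traceAssign', \setTraces, \funcSign') \models \varphi'$ implies $(\traceAssign', \setTraces, \funcSign) \models \pass(\varphi', \funcSign)$.

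For the base case, $\varphi$ is quantifier-free -- a boolean combination of atoms $\psi_1 \PSync \psi_2$ -- so the semantic clauses of trace formulas do not mention generators; thus $(\traceAssign, \setTraces, \funcSign') \models \varphi$ iff $(\traceAssign, \setTraces, \funcSign) \models \varphi$, and $\pass(\varphi, \funcSign) = \varphi$. Conjunction and negation inside the quantifier-free body follow directly from IH.

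The inductive step for quantifiers splits into three subcases. (i) For a passive $\exists \traceVar\ \varphi'$, both sides quantify over the same set $\setTraces$, so any witness $\trace \in \setTraces$ under $\funcSign'$ is transported by IH on $\varphi'$ with the extended assignment $\traceAssign[\traceVar \mapsto \trace]$. (ii) For an active $\exists \traceVar \in f(\vec{\traceVar})\ \varphi'$ with $f$ interpreted in $\funcSign$, $\pass$ keeps the quantifier. Picking a witness $\trace_f \in f^{\funcSign'}(\traceAssign(\vec{\traceVar}))$ satisfying $\varphi'$ under $\funcSign'$, the correctness of $\funcSign'$ supplies $\trace_s \in f^{\funcSign}(\traceAssign(\vec{\traceVar}))$ with $(\traceAssign[\traceVar \mapsto \trace_s], \setTraces, \funcSign') \models \varphi'$; IH then yields $(\traceAssign[\traceVar \mapsto \trace_s], \setTraces, \funcSign) \models \pass(\varphi', \funcSign)$, which is the needed witness. (iii) For an active $\exists \traceVar \in f(\vec{\traceVar})\ \varphi'$ with $f$ not interpreted in $\funcSign$, $\pass$ rewrites the quantifier into the passive $\exists \traceVar\ \pass(\varphi', \funcSign)$; instantiating the correctness clause with $\specF = f_\setTraces$ produces $\trace_s \in \setTraces$ still satisfying $\varphi'$ under $\funcSign'$, and IH closes the case. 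Universal quantifiers are handled by the symmetric clause of the correctness definition (or, via $\forall \traceVar \equiv \neg\exists \traceVar \neg$, by duality through the prenex layer).

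The main obstacle is the bookkeeping around correctness: the condition is stated at the granularity of a quantifier subformula together with its immediate body, so each time a quantifier is peeled off one must check that $\funcSign'$ remains a correct interpretation for the residual subformula with respect to the same $(\setTraces, \funcSign)$. This is immediate from the clause-by-clause definition of correctness, but must be spelled out cleanly so that the IH is applicable. A secondary subtlety is that the correctness conditions are one-directional -- a witness transfers only in the direction matching the outer quantifier polarity -- which is why the proof hinges on the prenex hypothesis, where a negation cannot invert that polarity between a quantifier and its body.
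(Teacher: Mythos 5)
Your proposal is correct and follows essentially the same route as the paper's proof: structural induction with the non-active cases being immediate, witness transfer via the correctness condition for active existentials (split into the $f\mathop{\in}\funcSign$ and $f\mathop{\notin}\funcSign$ subcases, the latter using $f_{\setTraces}$ and the passive rewriting by $\pass$), and the active universal case handled dually/by contraposition. Your remarks on the prenex assumption and on correctness being stated per subformula match the implicit bookkeeping in the paper's argument.
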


\begin{proof}
We prove by induction in the structure of \(\varphi\).
All cases besides active quantifiers are trivial, because \(\pass\) will not change the formula and \(\funcSign'\) is not applied.

We prove the induction case \(\varphi=\exists \traceVar \mathop{\in}f(\traceVar_1, \ldots, \traceVar_n)\ \varphi'\).
We assume:
(i) \(\funcSign'\) is a correct generator interpretation for \(\hyperSet{\varphi,\funcSign}\) and \(\setTraces\); and
(ii) for arbitrary  \(\traceAssign\) defined for all trace variables in \(\Free{\varphi}\): 
\((\traceAssign, \setTraces,\funcSign') \models \exists \traceVar \mathop{\in}f(\traceVar_1, \ldots, \traceVar_n)\ \varphi'\).
Wlog, we can assume that \(\{\traceVar_1, \ldots, \traceVar_n\}\mathop{\notin} \Bound{\varphi'}\) and, thus, \(\{\traceVar_1, \ldots, \traceVar_n\}\mathop{\subseteq}\Free{\varphi'}\).

We start with the case that \(f\) is not defined in \(\funcSign\).
Then, from (i), \(f\) is correct for \(\varphi\), the set of traces \(\setTraces\) and the specification function \(f_{\setTraces}\) that always return the set \(\setTraces\).
From (ii), definition of \(\models\), 
\(\traceAssign\) being defined for \(\Free{\varphi}\) and \(\{\traceVar_1, \ldots, \traceVar_n\}\mathop{\subseteq} \Free{\varphi}\), there exists \(\trace\mathop{\in}f(\traceAssign(\traceVar_1), \ldots, \traceAssign(\traceVar_n))\) s.t.\ 
\(({\traceAssign[\traceVar\mapsto \trace]}, \setTraces, \funcSign')\models \varphi'\).
From \(f\) being correct, there exists \(\trace' \mathop{\in} f_{\setTraces}\) s.t.\ 
\((\traceAssign[\traceVar\mapsto \trace'], \setTraces, \funcSign')\models \varphi'\).
By \(\traceAssign[\traceVar\mapsto \trace']\) being an assignment on \(\textit{Free}(\varphi')\), induction hypothesis, definition of \(\models\)
and definition of \(f_{\setTraces}\):
\((\traceAssign, \setTraces, \funcSign)\models \exists \traceVar\ \textit{passive}(\varphi', \funcSign)\).
Hence, \((\traceAssign, \setTraces, \funcSign)\models \textit{passive}(\varphi, \funcSign)\).

In the case that \(f\mathop{\in}\funcSign\), \(f\) is correct for  \(\varphi\), the set of traces \(\setTraces\) and the specification function \(\specF = f^{\funcSign}\).
The proof is identical until we have \(\trace'\mathop{\in}f^{\funcSign}(\traceAssign(\traceVar_1), \ldots, \traceAssign(\traceVar_n))\) s.t.\
\((\traceAssign[\traceVar\mapsto \trace'], \setTraces, \funcSign')\models \varphi'\).
By induction hypothesis,
\((\traceAssign[\traceVar\mapsto \trace'], \setTraces, \funcSign)\models \pass(\varphi', \funcSign)\).
By \(f\mathop{\in}\funcSign\), \(\trace'\mathop{\in}f^{\funcSign}(\traceAssign(\traceVar_1), \ldots, \traceAssign(\traceVar_n))\), definition of \(\models\) and :
\((\traceAssign, \setTraces, \funcSign)\models \pass(\exists \traceVar\mathop\in f(\traceVar_1, \ldots,\traceVar_n)\ \varphi', \funcSign)\).

The induction case \(\varphi=\forall \traceVar \mathop{\in}f(\traceVar_1, \ldots, \traceVar_n)\ \varphi'\) is proved by contraposition.
That is, we assume that \((\traceAssign, \setTraces,\funcSign)\not \models \textit{passive}(\varphi)\).
And, by using an analogous argument to the previous case, we conclude that \((\traceAssign, \setTraces,\funcSign)\not \models \varphi\).
\end{proof}

\subsection{Monitorable Hypernode Formulas}
\label{sec:mon_hypernode}

We prove that \logic formulas where all passive quantifiers are at the beginning of the formula and are all of the same kind (i.e., no quantifier alternation), with no restriction on active quantifiers,  are monitorable.
%
%

\begin{theorem}
\label{thm:mon}
Let \(\varphi\) be a hypernode formula with only active trace quantifiers s.t.\ \(\Free{\varphi}=\{\traceVar_1, \ldots, \traceVar_n\}\)
and \(\funcSign\) be a generators interpretation.
%
For the formula \(\varphi_{\forall}=\forall \traceVar_1 \ldots \forall \traceVar_n \varphi\), 
either all systems are models of \(\varphi_{\forall}\) and \(\funcSign\) (i.e., \(\hyperSet{\varphi_{\forall}, \funcSign} = \powerSet{(\domain^{\DtDomain})^* \{\termSym\}}\)) or \(\varphi\) with generators interpretation \(\funcSign\) is monitorable for violation.
While, for the formula \(\varphi_{\exists}=\exists \traceVar_1 \ldots \exists \traceVar_n\ \varphi\),  either \(\varphi_{\exists}\) together with \(\funcSign\) has no models (i.e., \(\hyperSet{\varphi_{\exists}, \funcSign} \mathop{=} \emptyset\)) or \(\varphi\) with generators interpretation \(\funcSign\) is monitorable for satisfaction.
\end{theorem}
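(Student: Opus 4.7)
The plan hinges on a structural observation: because $\varphi$ contains only active trace quantifiers, its truth under an assignment $\traceAssign$ to its free (passive) variables $\traceVar_1, \ldots, \traceVar_n$ is determined purely by $\traceAssign$ and $\funcSign$, independently of the ambient set of traces $\setTraces$. I would first verify this \emph{passive-independence} by a straightforward induction on $\varphi$: every quantified subformula has the form $\mathbb{Q}\traceVar \mathop{\in} f(\traceVar_{i_1}, \ldots, \traceVar_{i_k})\ \varphi'$, whose witnesses range over $f^{\funcSign}(\traceAssign(\traceVar_{i_1}), \ldots, \traceAssign(\traceVar_{i_k}))$ and never consult $\setTraces$, while the atomic inequalities $\psi \PSync \psi'$ only read $\traceAssign$ via the projections $\pv{p}{\traceVar_i}$. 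As a consequence, for any two sets $\setTraces, \setTraces'$, $(\traceAssign, \setTraces, \funcSign) \models \varphi$ iff $(\traceAssign, \setTraces', \funcSign) \models \varphi$.

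For $\varphi_\forall = \forall \traceVar_1 \ldots \forall \traceVar_n\ \varphi$, I would then dichotomize. If every system lies in $\hyperSet{\varphi_\forall, \funcSign}$, we are in the first case. Otherwise, by unfolding the universal quantifiers, there exist a system $\setTraces_0$ and an assignment $\traceAssign_0$ with $\traceAssign_0(\traceVar_i) \in \setTraces_0$ such that $(\traceAssign_0, \setTraces_0, \funcSign) \not\models \varphi$. I would extract finite terminated refinements $\trace_1^{\fin}, \ldots, \trace_n^{\fin}$ of the $\traceAssign_0(\traceVar_i)$ that still falsify $\varphi$ (see next paragraph) and, given any observation $\observ$, define $\observ' = \observ \cup \{\trace_1^{\fin}, \ldots, \trace_n^{\fin}\}$. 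For every $\mathcal{V}$ with $\observ' \hyperExtend \mathcal{V}$, each $\trace_i^{\fin}$ is a prefix of some trace in $\mathcal{V}$, and since $\trace_i^{\fin}$ is terminated the only admissible extension is a trailing-$\termSym$ padding of $\trace_i^{\fin}$ itself. Invoking passive-independence and the fact that $\termSym$-padding preserves the truth of $\varphi$ on a fixed assignment, the padded traces still witness $\mathcal{V} \notin \hyperSet{\varphi_\forall, \funcSign}$. Hence $\observ' \in \badSet(\varphi_\forall)$, which is exactly monitorability for violation. The $\varphi_\exists$ case is entirely symmetric: either $\hyperSet{\varphi_\exists, \funcSign} = \emptyset$, or a satisfying assignment exists whose finite terminated version, added to any observation, forces $\observ' \in \goodSet(\varphi_\exists)$.

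The main obstacle is the two facts left implicit above: (i) from any assignment to (possibly infinite) system traces that falsifies (resp.\ satisfies) $\varphi$, one can extract finite terminated traces that still falsify (resp.\ satisfy) it; and (ii) $\termSym$-padding of a terminated trace preserves satisfaction of $\varphi$ on that assignment. Both reduce to a structural induction on the trace formulas $\psi$ of Section~\ref{sec:hyper_logic}: the interpretation $\interpretAtom{\traceAssign}{\psi}{\DtDomain}$ involves only regex-style operators, fixed-range slicing, and stutter reduction, so any validating or invalidating behavior already manifests on a sufficiently long terminated prefix, and once the termination marker is reached the remaining $\termSym$'s are transparent to all regex operators in $\psi$. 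I expect this compactness-style argument on the regex structure of hypernode trace formulas to be the technically most delicate step, and the one that most tightly couples the proof to the specifics of \logic's atomic layer.
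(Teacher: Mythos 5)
Your proposal follows essentially the same route as the paper's own proof: a dichotomy between trivial satisfaction and the existence of a falsifying (resp.\ satisfying) instantiation of the $n$ free variables, which is then turned into at most $n$ terminated witness traces whose addition to any observation yields a bad (resp.\ good) prefix, since terminated traces extend only by $\termSym$-padding that does not affect the valuation. The only difference is presentational: you make explicit (as lemmas to be proved by induction) the independence of the active-only formula from the ambient trace set and the $\termSym$-padding invariance, which the paper asserts in passing, so your write-up is if anything slightly more careful on the same argument.
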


\begin{proof}[Proof]
Consider an arbitrary \(\varphi\) with only active trace quantifiers and \(\Free{\varphi}=\{\traceVar_1, \ldots, \traceVar_n\}\), and a generators interpretation \(\funcSign\).
%
%

We start with the universal case; that is, \(\varphi_{\forall}=\forall \traceVar_1 \ldots \forall \traceVar_n \varphi\).
If all sets of traces \(\setTraces\) are models for the active part, \(\setTraces \models \varphi\), then \(\varphi_{\forall}\) is vacuously satisfied.
If that is not the case, 
\(\neg \varphi_{\forall}\) is a formula starting with existential passive quantifiers, so we need to observe at most \(n\) terminated traces to decide that we witnessed a violation of \(\varphi_{\forall}\).
Then, all sets of at most \(n\) terminated (observed) traces
that do not satisfy \(\varphi_{\forall}\) are in the set \(\badSet^{}(\hyperSet{\varphi_{\forall},\funcSign})\), because they include enough traces to determine the violation of \(\neg \varphi_{\forall}\) and terminated traces can only be extended by terminated symbols (which does not change the valuation of \(\varphi_{\forall}\)).
Formally, \(\hyperSet{\neg \varphi_{\forall}, \funcSign,n }   \mathop{\subseteq} \badSet^{}(\hyperSet{\varphi,\funcSign})\) where \(\hyperSet{\neg \varphi, \funcSign, n}\mathop{=} \{|\setTraces| \mathop{\leq} n \ |\ \setTraces\mathop{\in} \hyperSet{\neg \varphi, \funcSign} \}\).
%
%
%
%
Then, for all extensions of an arbitrary observation \(\observ \mathop{\in} \boundPowerSet{}{\fin}{(\domain^{\DtDomain})^*}\) with a set from \(\observ'\mathop{\in} \hyperSet{\neg \varphi, \funcSign,n}\) 
we have \(\observ\cup\observ' \mathop{\in}\badSet^{}(\varphi,\funcSign).\)

The proof for the existential case is analogous.
\end{proof}



In general, specifications with only passive quantifiers and quantifier alternation are not monitorable~\cite{monitoringHyperLTL19,stucki2021gray}.
It is, however, possible to monitor for arbitrary passive specifications when we can activate some of its quantifiers with correct generators for the system being monitored, as proved in  Theorem \ref{thm:correct_ative} and \ref{thm:mon}.
%




\section{Monitoring \logic}

\label{sec:mon-hnl}
In this section, we describe our monitoring algorithm for \logic simple formulas, i.e., formulas with only a single trace variable per trace formula, where this variable is not inside an iteration~\cite{ChalupaH23}.
The algorithm is combinatorial~\cite{hahn19}, that is, it examines every instantiation of quantifiers in the worst case.
The monitor for a \logic formula is in fact a tree of monitors: the top-level monitor takes the longest prefix $P$ of quantifiers that are of the same \emph{type} and instantiates them with traces from the domain of the quantifiers. Two quantifiers have the same type if they range over the same set of traces (they have the same domain) and they are both either universal, or both existential.
For each instance of quantifiers $P$ (i.e., trace assignment to $P$), the top-level monitor creates a sub-monitor that handles the rest of the formula with $P$ instantiated. Each sub-monitor then does the same thing: it instantiates another sequence of quantifiers of the same type and for each instantiation it creates sub-monitors for the remaining formula.
This nesting continues until the formula is reduced to a formula with only one type of quantifiers.
This last formula can be monitored with an \emph{basic monitor}, that is a monitor capable of monitoring universally quantified \logic formulas.

In the next subsection, we discuss basic monitors, and after that we give details of the quantifier instantiation, mainly, how we handle existentially quantified sub-formulas.

\subsection{Basic Monitors}
\label{ssec:basic_mon}

Basic monitors are monitors that can monitor universally quantified formulas (where the quantifiers range over the same set of traces).
Let us for a while consider \ehl~\cite{ChalupaH23}, the logic \logic is build on.
In~\cite{hypernodeMonitor24}, we gave an algorithm to monitor universally quantified \ehl formulas.
The algorithm translates atomic comparisons (formulas of the form $\varphi_1 \le \varphi_2$) into \emph{finite 2-tape automata with priorities} and then evaluates these automata on all possible tuples of input traces.

More concretely, assume a universally quantified formula $\psi = \forall \pi_1...\pi_k: \psi_{\mathit{qf}}$ where $\psi_{\mathit{qf}}$ is quantifier-free.
The algorithm instantiates quantifiers $\pi_1...\pi_k$ with every k-tuple of input traces and then evaluates $\psi_{\mathit{qf}}$ on these trace tuples.
Formula $\psi_{\mathit{qf}}$ is a boolean combination of atomic comparisons and its evaluation on a k-tuple of traces proceeds by evaluating the comparisons in the order given by a  binary decision diagram (BDD) that captures the boolean structure of $\psi_{\mathit{qf}}$: The nodes of the BDD represent the results of evaluating the comparisons and the BDD evaluates to \emph{true} iff the whole formula evaluates to \emph{true} for the results of comparisons.
Every comparison is evaluated on the input traces by running its corresponding automaton on the traces. The whole algorithm works incrementally, driven by new events on traces and new traces.

This monitoring algorithm for \ehl can be used as an basic monitor, provided the body of the monitored \logic formula uses only constructs from \ehl and the data domain is finite.
However, we can use exactly the same algorithm to obtain basic monitors for \logic -- only instead of translating atomic comparisons to finite 2-tape automata, we translate them into finite 2-tape symbolic register automata. Symbolic automata~\cite{DAntoniV17} have predicates on transitions instead of symbols,
and therefore can handle also infinite alphabets. Registers are necessary to handle stuttering over infinite alphabets.
The describe the main parts of the translation of atomic comparisons to symbolic register automata in the Appendix~\ref{app:sts}.

The rest of the algorithm for monitoring \logic, that we describe in the text that follows, is agnostic to the type of basic monitors. Therefore, we can use both, \logic or \ehl basic monitors (or even their mix) depending on what is more appropriate.

\subsection{The Monitoring Algorithm for \logic}

For simplicity, the description of instantiating quantifiers at the beginning of this section
ignores whether a quantifier is universal or existential.
This must be taken into account, though.
Recall that the instantiation procedure works as follows: given a formula
$\psi = Q^*_1Q^*_2...Q^*_k.\ \psi_\mathit{qf}$ where
$Q_1, Q_2, ...$ are types of quantifiers such that each $Q_i$ and $Q_{i+1}$
are different types, 
the top-level monitor instantiates quantifiers $Q_1^*$, and for each instance of quantifiers 
(represented as a partial trace assignment), it creates a sub-monitor for the formula $Q^*_2...Q^*_k.\ \psi_\mathit{qf}$ (propagating along the assignment of quantifiers $Q_1^*$).
The sub-monitors recursively instantiate quantifiers $Q_2$, then $Q_3$, and so on until we are left with the formula $Q_k^*.\ \psi_\mathit{qf}$ which can be monitored by basic monitors.
Basic monitors require that the input formula is universally quantified
and if it is not, we must transform the formula using the equality $\exists \pi_1, ..., \pi_k.\ \psi_\mathit{qf} \equiv \neg(\forall \pi_1, ..., \pi_k.\ \neg\psi_\mathit{qf} )$.
That is, instead of existentially quantified formula $\exists \pi_1, ..., \pi_k.\ \psi_\mathit{qf}$, we monitor the formula $\forall \pi_1, ..., \pi_k.\ \neg\psi_\mathit{qf}$.
To preserve the equality, once we obtain the result from the monitor, we negate it.
To approach the monitors uniformly, we handle all existential quantifiers like this.
In summary, every (sub-)monitor except basic monitors does two things: it instantiates quantifiers $Q_i$ for some $i$, and creates sub-monitors to evaluate the (possibly negated) rest of the formula.
If all its sub-monitors are evaluated, the monitor concludes with the result, negating it if its monitored sub-formula was negated.

\hide{\xxx{This structure of negations is in a sense similar to automata-based LTL model-checking}}

Notice that because of the way the sub-monitors are created, each sub-monitor always instantiates quantifiers from the same set of traces: be it observations or traces generated by a function.
In our implementation, a generator function is an object that can be queried for sets of traces that are then passed to monitors.
These sets are initially empty and are being built (either in lock-step or even concurrently to monitors) as new events come to their input traces.
Our entire implementation works incrementally: a new event can come at any time on any unfinished trace and a new trace can be announced anytime. A new trace triggers quantifier instantiation and creating new (sub-monitors), while new events trigger stepping the basic monitors and updating generated traces.

Appendix~\ref{app:algorithm} contains an example of instantiating quantifiers and the full formulation of the algorithm for monitoring \logic.



\section{Evaluation}
\label{sec:eval}

%

We have implemented \logic and  evaluated them on three use cases:
\begin{itemize}
  \item using an under-approximative generator function to speed up violation detection when monitoring \emph{observational determinism (OD)}~\cite{Zdancewic03}, which is a universally quantified formula (Section~\ref{ssec:eval:od}),
  \item using an over-approximative generator function to monitor \emph{initial-state opacity (OP)}~\cite{badouel2007concurrent}, which is a $\forall\exists$ hyperproperty (Section~\ref{ssec:eval:op}), and
    \item using a generator function to obtain linearized traces to monitor linearizability and a sequential property of concurrent traces (Section~\ref{ssec:eval:lin}).
\end{itemize}

The experiments were run on a laptop with the \emph{11th Gen Intel(R) i7 @ 2.8Ghz} processor and 16GB of RAM. Timeout per a run of a monitor was $30s$.

\subsection{Monitoring Observational Determinism}\label{ssec:eval:od}

In these experiments, we assume a scenario where a robot moves on a $10\times 10$ grid.
The grid is partitioned into areas (groups of adjecent cells).
The robot takes as a secret input a target cell, and a publicly observable sequence of areas that it must go through on its way from the initial cell to the target cell (the robot can visit also other areas in between each two input areas).
The state of the robot, i.e., in which cell it is, is hidden all the time from the observer (e.g., an attacker), who can observe only in which area the robot is at every step.

We let the robot use a randomly generated deterministic strategy,
and we check if the strategy leaks information about the hidden target state
by checking a variation of OD: 
\begin{equation}\label{eq:eval_od}
   \forall \traceVar \forall \traceVar'\ \ \pv{input}{\traceVar} \le \pv{input}{\traceVar'} \rightarrow \pv{area}{\traceVar} = \pv{area}{\traceVar'}
\end{equation}

Notice that in Equation~(\ref{eq:eval_od}), we check only for prefixing of inputs instead of for equality.
The reason for this is that inputs are given at the beginning of the trace and then the rest of the inputs projection consists of a special delimiter symbol and a padding with default symbols.
Two traces may have the same inputs but different projection to $\pvF{input}$, because the padding with the default symbol can have different lengths (if the traces have different lengths).
For example, we can have traces with the input projection $AB\bullet\#\#$ and $AB\bullet\#$ (where $\bullet$ is the delimiter symbol and $\#$ the padding symbol).
Clearly, the traces have same inputs, but the sequence of symbols is not equal.
That is why we check only for $\pv{input}{\traceVar} \le \pv{input}{\traceVar'}$ in Equation (\ref{eq:eval_od}).
Because OD is a symmetric property, this is sufficient for determining all pairs of traces that need to be checked for OD. 

\begin{figure}[t]
  \centering
\includegraphics[width=.88\textwidth]{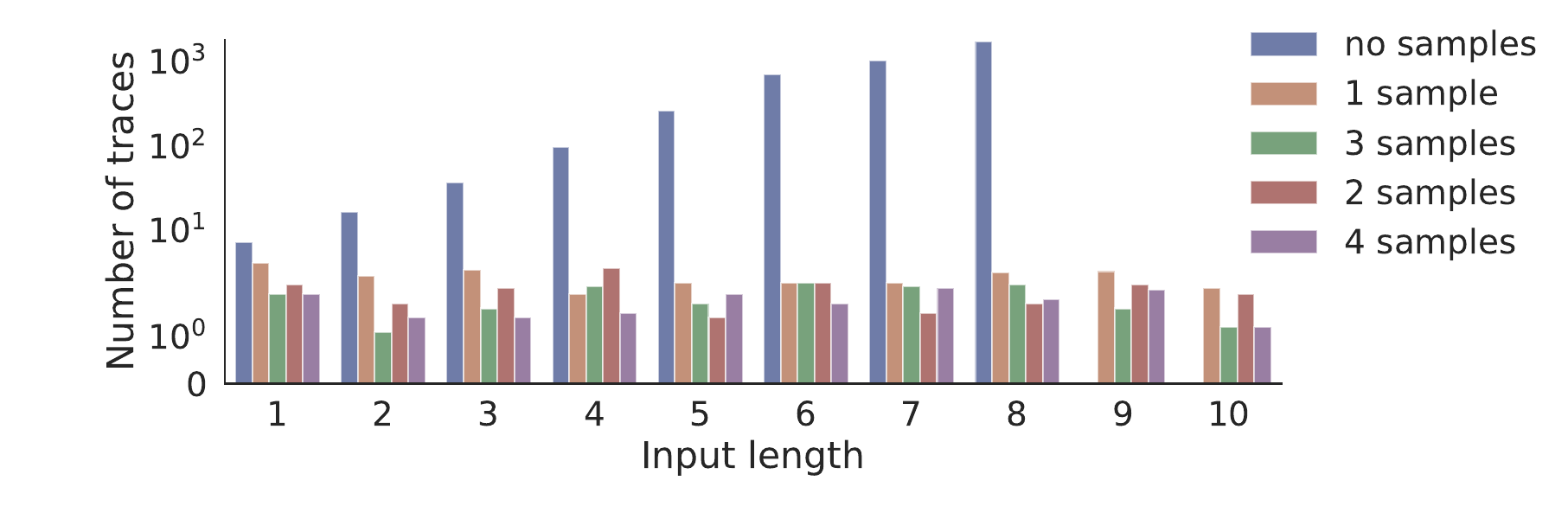}D
\caption{The number of traces required to find an OD violation by a monitor without a generator function (\textit{no samples}), and with a function (\textit{$n$ sample(s)}, where $n$ is the number of samples the function returns).}
\label{fig:traces_to_violation}
\end{figure}

As mentioned in the introduction, we can use a generator function to speed up monitoring of universally quantified formulas. In this case, we monitor the formula:
\begin{equation}\label{eq:eval_od_fun}
  \forall \traceVar \forall \traceVar'\mathop{\in} \mathit{samples}(\traceVar)\ \ \pv{area}{\traceVar} = \pv{area}{\traceVar'}
\end{equation}
where the function $\mathit{samples}(\traceVar)$ returns randomly sampled traces with inputs taken from $\pi$.
We parametrized $\mathit{samples}$ by the number of traces that it returns.

Fig.~\ref{fig:traces_to_violation}
compares how many traces on average had to be observed in the system before a violation of OD was detected. The numbers are averages of 10 repetitions of the experiment (with different traces each trial).
When using the generator function, the violation is detected much sooner, often after one or two traces are observed, while without the function it can take hundreds of traces.
In many cases, without the function, a violation went undetected because the set of observations did not contain traces that would witness the violation (monitors were set to give up after 2048 traces).
Notably, if the length of the input exceeded 8 areas, the probability of observing the pair of violating traces decreased such that the monitor without the function detected no violation in all trials.




\newcommand{\eqod}{(\ref{eq:eval_od})\xspace}
\newcommand{\eqodfun}{(\ref{eq:eval_od_fun})\xspace}

We remark that formula \eqodfun is in fact not equivalent to formula \eqod and their monitoring can yield different results.
As we have seen, there are cases when monitoring formula \eqodfun finds a violation while monitoring the formula without the generator function does not,
but there can also be cases when monitoring formula \eqod finds a violation and monitoring the formula with function does not.
The reason is that the function $\mathit{samples}$ is not guaranteed to return also observed traces with the same inputs. Therefore, the monitor may observe two traces that violate OD,
but they are never compared if the generator function does not return one of them.
This can be solved by either letting $\mathit{samples}$ to remember seen traces and returning also seen traces with the same inputs, or by monitoring a formula that combines formulas \eqod and \eqodfun together.
However, because we consider the chance that monitoring \eqod finds a violation while monitoring \eqodfun does not, we ignore this problem in our experiments.

\subsection{Monitoring Initial-State Opacity}\label{ssec:eval:op}

In these experiments, we once more consider a robot moving on a $10\times 10$ grid.
The grid is again partitioned into publicly observable areas, while concrete states and actions of the robot are hidden from an observer.
The robot goes from one of multiple secret initial locations to the target area according to a non-deterministic strategy, having to pass through user-defined input areas.
We want to check the \emph{initial-state opacity}~\cite{opacityDES16} property requiring that an observer cannot determine from the observations which was the initial state the robot started from.
In \logic, we can specify this property as follows:

\vspace*{-4mm}
\begin{equation}\label{eq:eval_op}
  \forall \traceVar \exists \traceVar'\ \  \pv{area}{\traceVar} = \pv{area}{\traceVar'} \land \pv{act}{\traceVar} \neq \pv{act}{\traceVar'}
\end{equation}
where $\pvF{act}$ is the sequence of actions performed by the robot; the actions can be (move) \emph{left}, \emph{right}, \emph{up}, \emph{down}, or the action \emph{stand}.

We define a generator function $\mathit{eqarea}(\trace) \mathop{=} \{\trace' \mid \pv{area}{\trace'} = \pv{area}{\trace}\}$ returning for a trace $\trace$ all traces with same public observations (but possibly different actions).
We define also an alternative version of this function that directly returns a single trace $\trace_w$ for which it holds that $\pv{area}{\trace} = \pv{area}{\trace_w} \land \pv{act}{\trace} \neq \pv{act}{\trace_w}$,
i.e., this other version is tailored to checking opacity.
We refer to monitors using these functions as to \emph{AAT} (all admissible traces) and \emph{1W} (one witness), resp.
The formula for monitoring opacity then becomes
%
$
\forall \pi \exists \pi'\mathop{\in} \mathit{eqarea}(\pi).\  \pv{act}{\traceVar} \neq \pv{act}{\pi'}.
$
Plots in Fig.~\ref{fig:opacity}
show the results of monitoring initial-state opacity with these two generator functions.
The plot on the left shows the CPU time of monitoring for input length equal 8.
We can see that \emph{AAT} scales works much worse than \emph{1W}.
This shows that tailoring functions to fit given problems can have a huge impact on the performance.
The plot on the right shows the CPU time of running \emph{1W} monitor for different lengths of input and different number of traces.
The monitor can process less and less traces with the growing input, which corresponds to the fact that the longer is the input (and thus also the traces), it becomes harder for $\mathit{eqarea}$ to find the witness.

\begin{figure}
  \begin{tabular}{c p{0mm} c}
\includegraphics[width=.44\textwidth]{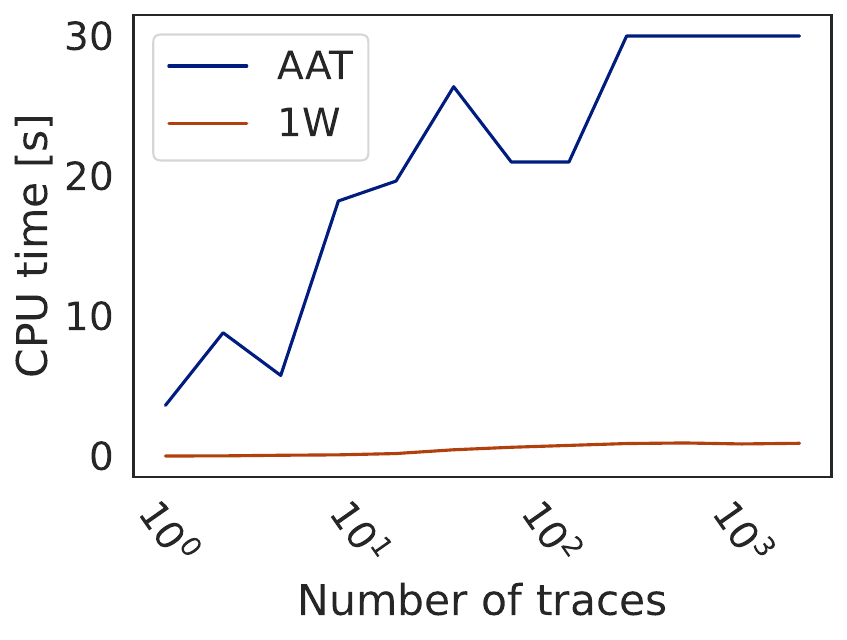}
&&
\includegraphics[width=.44\textwidth]{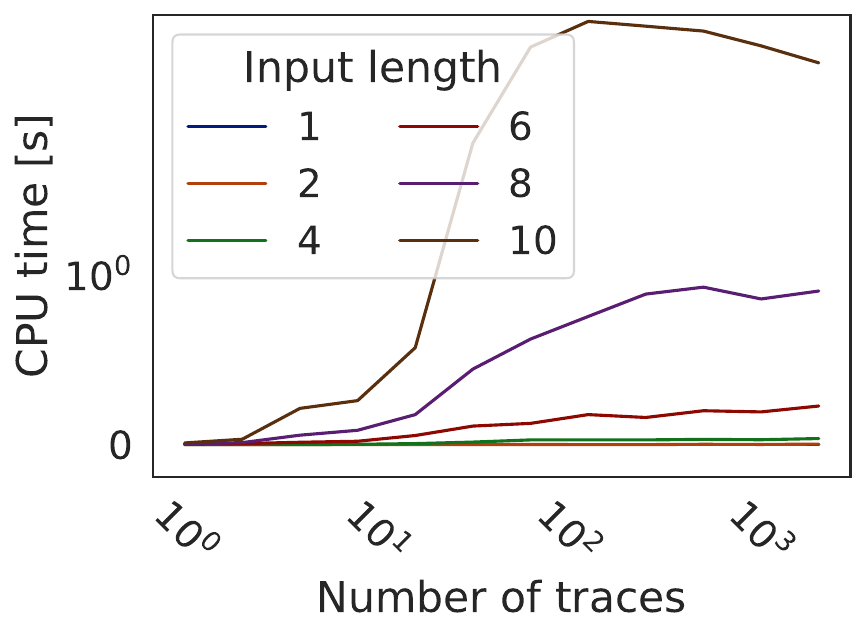}
  \end{tabular}
  \caption{The left plot shows the CPU time of monitoring OP with \emph{1W} and \emph{AAT} monitors for input length equal 8.
  The right plot shows the CPU time of monitoring OP with the \emph{1W} monitor.}
\label{fig:opacity}
\end{figure}

\subsection{Monitoring Linearizability}\label{ssec:eval:lin}


The last set of experiments was with checking linearizability and properties on linearizized traces.
In particular, we obtained concurrent traces from the Michael-Scott queue~\cite{msqueue} implementation in the \emph{Scal} benchmark suite~\cite{HaasHKLPS15}.
The traces are operation-based, that is, instead of containing events with invoke and response, the events are operations (push and pop) with invoke and response time. This slightly simplifies the situation as we do not need to search for matching events. 
For each concurrent trace, we checked the following formulas:
\begin{itemize}
\item[1)] \(\forall \pi \exists \pi_l\mathop{\in} \mathit{lin}(\pi).\  true\)
\item[2)] \(\forall \pi \exists \pi_l\mathop{\in} \mathit{lin}(\pi) \exists \pi_c\mathop{\in} \mathit{legal}(\pi_l).\  true\)
\item[3)] \(\forall \pi \exists \pi_l\mathop{\in} \mathit{lin}(\pi).\  bounded(\pi_l, 2)\)
\end{itemize}
Here, $\mathit{lin}$ is a generator function that returns a linearization of its input trace or the empty set if the trace is not linearizable. We implemented this function using a simple backtracking algorithm that tries to linearize the trace using brute force.
This is the current bottle-neck of our monitors, but because we want to only show the feasibility of monitoring linearizability with our approach, and not to compete with specialized algorithms for this task, we decided to use this simple implementation.
Because functions brings modularity into \logic, we can easily replace the implementation of $\mathit{lin}$ with one of the efficient algorithms for queue linearizability in the future.

Function $\mathit{legal}$ re-executes its input trace on the \emph{queue} implementation from the \emph{C++ STL},
checking if the linearized trace is indeed legal. Using this function amounts to validating that the trace returned by $\mathit{lin}$ is legal and can be actually executed on an independent queue implementation.
This is a form of certificate checking.

Finally, the last formula checks if the queue never exceeds the size of 2.
We check this property by using the predicate $bounded(\pi_l, 2)$, which checks for the negation of the property. This predicate can be defined by comparing the sequence of operations to a regular expression (this regular expression, however, blows up in the size, so we do not show it here for the sake of brevity -- in our implementation, we specified this predicate using an automaton instead of a regular expression, because the automaton is small and simple). 
Note that it is vital to check this property on the linearized trace, because on concurrent traces, the monitor could report false violations.

%
We ran monitors on traces obtained from having 2 producers push \emph{nops} numbers into the queue, and 2 consumers pop the numbers, all in parallel.
Therefore, a trace has the length $4\cdot\mathit{nops}$.
The results summarized in Table~\ref{tab:lin} demonstrate that it is feasible to use \logic for monitoring linearizaiblity and monitoring sequential properties over concurrent traces.

\begin{table}
  \centering
  \caption{The CPU time of monitoring linearizability (\emph{lin}), linearizability with a check for the legality of the witness trace (\emph{lin + legal}),
  and boundedness of the queue on the linearized trace (\emph{lin + bounded}).
The CPU time is in seconds.}
  \label{tab:lin}
  \setlength{\tabcolsep}{1em}
    \begin{tabular}{lrrrrr}
\toprule
& \multicolumn{5}{c}{Number of \emph{push} operations}\\
\cmidrule{2-6}
& 200 & 400 & 600 & 800 & 1000\\
\midrule
\textit{lin} & 3.16  & 3.47  & 5.89  & 11.84 & 22.22 \\
\textit{lin + legal} & 5.15  & 3.47  & 6.02  & 11.97 & 22.46 \\
\textit{lin + bounded} & 7.06  & 3.76  & 6.06  & 11.90 & 21.99 \\
\bottomrule
\end{tabular}

\end{table}

\medskip
\noindent
\textbf{Threats to validity} Our experiments represent only a small fraction of possible cases for monitoring OD, OP, or linearizability. As such, there may be situations and domains where the results of the experiments do not apply.

\section{Related Work}
\label{sec:rel_work}
%

The first work on monitoring hyperproperties, by Agrawal and Bonakdarpour~\cite{Agrawal16}, introduces a monitoring algorithm for a restricted class of \(k\)-safety properties expressed in HyperLTL built from monitors for \(3\)-value LTL.
Later, Brett et al.~in \cite{brett17}, present a monitor for HyperLTL formulas based on formula rewriting.
Automata-based monitoring algorithms for HyperLTL were introduced by
Finkbeiner et. al.~\cite{monitorHyper17,monitoringHyperLTL19} and implemented in the RVHyper tool \cite{RVHyper18}.
Alternatively, Hahn et al.~\cite{hahn19,Hahn19rv} and Aceto et al.~\cite{AcetoAAF22}, present constraint-based monitors for HyperLTL formulas.
All these solutions, interact with the system as a black-box and 
monitor for synchronous hyperproperties with no quantifier-alternation.

Recently, Chalupa and Henzinger~\cite{prefixTransducers23}, introduced 
transducers for RV of hyperproperties, which can be used to monitor for alternation-free asynchronous hyperproperties.
A different approach for asynchronous hyperproperties was presented by Beutner et al.~\cite{BeutnerRV24}, who introduced finite trace semantics for second-order HyperLTL along its monitoring algorithm.
Both works use first-order quantification over observations, while second-order HyperLTL allows quantification over sets of traces, which are, in~\cite{BeutnerRV24}, restricted to sets specified as fix points.

The limitations of passively monitoring hyperproperties were already identified in \cite{rvHyperStaticborzoo18}, where the authors propose to use static analysis to add knowledge about the system under monitoring to the monitor.
Later, Stucki et. al. \cite{stucki2021gray}, realize this idea by implementing a gray-box monitor for distributed data minimization (which is a hyperproperty with a \(\forall \exists\) quantification pattern) using a symbolic representation of a system to instantiate the existential quantifier.
In these approaches, the monitors actively instantiate some of the hyperproperty quantifiers with information previously extracted from the system.
In our work, we propose making this distinction visible at specification time, which effectively extends the set of expressible and monitorable hyperproperties.


\section{Conclusion}

We present a new general approach to monitor hyperproperties, combining passive observations with active construction of traces.
In this work, our primary goal was to highlight the power of active specifications. 
At this point, generator functions are left intentionally abstract and are best interpreted as oracles that the monitor can query at runtime.
Clearly, any monitoring outcome is only as useful as the specification it monitors.
There are, however, future research directions that can improve the quality of the generators used by the specification.
For instance, we would like to investigate how to synthesize generators for passive quantifiers automatically.
An alternative direction is to identify classes of generator functions that are effective for the practical monitoring of hyperproperties.

Our current evaluation focuses on security properties and a straightforward implementation for linearizability with generator functions based on testing and queries to a reference model.
However, the approach introduced here is not limited to these use cases or to the use of hypernode logic.
In future work,  we will explore other natural uses for generator functions, and explore further the boundaries of effective monitorability using generator functions.
%




\bibliographystyle{plain}
\bibliography{main}

\vfill\clearpage
\appendix

\section{Translating \logic atoms to Symbolic Transducers and Automata}
\newcommand{\store}[2]{#1 := #2\xspace}
\newcommand{\out}[1]{\ /\ #1}
\newcommand{\cnd}[1]{\textcolor{blue}{\ [#1]}}

\newcommand{\ttrans}[2]{\begin{tabular}{c}#1\ \\#2\end{tabular}}
\newcommand{\ovr}[2]{\begin{matrix}#1\\#2\end{matrix}}

\label{app:sts}

In this section, we show how to translate simple atoms of \logic formulas into \emph{symbolic transducers},
and how to compose these symbolic transducers into \emph{symbolic register automata} that can be used as atom monitors in the algorithm from~\cite{ChalupaH23}.
This text is a part of a current submission. After the review procedure, we will publish the pre-print of the submitted paper on ArXiv and refer to it (or its accepted version, if it gets accepted) from the main part of the paper.

\subsection{Symbolic transducers}

A \emph{symbolic transducer (\ST)}~\cite{VeanesHLMB12}, also called \emph{symbolic finite-state transducer with registers}~\cite{VeanesHLMB12} is an extension of \emph{symbolic finite-state automata} with outputs and registers. A symbolic finite-state automaton, in turn, is a finite-state automaton whose transitions are labeled with predicates over a decidable theory instead of with symbols~\cite{DAntoniV17}.
Each predicate matches a potentially infinite set of concrete symbols, and therefore symbolic automata are suitable for working with infinite alphabets.

In the defining work, symbolic automata and transducers are defined generically for \emph{some} theory~\cite{VeanesHLMB12,DAntoniV17}.
Because we work with a concrete set of predicates, we instantiate \STs for these predicates in our presentation. Another simplifications that we can do because of our context is that we allow input-$\epsilon$ transitions (transitions that do not read anything from the input trace), we require the output be at most one symbol, and we do not differentiate between the input and output alphabets.

\begin{definition}[Symbolic transducer]
   A \emph{symbolic transducer (\ST)} is a tuple $(Q, q_0, \mathcal{D}, F, \Delta, R)$ 
   where $Q$ is a finite set of states with $q_0 \in Q$ being the initial state.
   $F \subseteq Q$ is the set of final states, $\mathcal{D}$ is the domain (alphabet) of the transducer, and $R$ is a finite set of registers.
   Finally, $\Delta \subseteq Q\times\mathcal{X}\cup\{\epsilon\}\times\mathcal{C}\times\mathcal{W}^*\times\mathcal{X}\cup R\cup\mathcal{D}\cup\{\epsilon\}\times Q$ is the transition relation detailed in the subsequent text.
\end{definition}
A transition in an ST is a tuple $(q, i, c, w, o, q')$ where
\begin{itemize}
    \item $q, q' \in Q$ are states.
    \item $i\in \mathcal{X}\cup\{\epsilon\}$ is the input symbol which is either a \emph{symbol variable}\footnote{
Because later we will have \STs with multiple input words, we have to be explicit about the symbol variable so that we are able to differentiate between symbols from different words. This is in contrast to the standard presentation of \STs, where the name of the variable is irrelevant and present only in the constraints~\cite{DAntoniV17}} taken from a countable set of variables $\mathcal{X}$, or it is the symbol $\epsilon$ (meaning no symbol is read from the input).
    \item  $c\in\mathcal{C}$ is a constraint described by the non-terminal $\mathcal{C}$ in Figure~\ref{fig:st_transitions}. We require that $x$ is the only free variable\footnote{Variable $x$ is the only free variable in $c$ iff the formula $c$ with $x$ substituted by an element of $\mathcal{D}$ is equivalent to $true$ or $false$.} in $c$ if $i\in\mathcal{X}$; otherwise, we require that $c$ does not contain $x$.
    \item $w\in\mathcal{W}$ is a list of register updates where each update is a term from $\mathcal{W}$ in Figure~\ref{fig:st_transitions}.
    \item $o\in\mathcal{X}\cup{R}\cup\mathcal{D}\cup\{\epsilon\}$ is the output symbol which is either a symbol variable, a register, an element of the domain, or $\epsilon$. We require that if $o \in\mathcal{X}$, then $i\in\mathcal{X}$ and $o = i$.
 \end{itemize}

\begin{figure}
\[
\begin{array}{c c c}
Constraints &  Assignments \\ 
\cmidrule{1-3}
\vspace*{1mm}
\begin{array}{ll}
t ::=&\ x\ |\ r\ |\ a\ \\
\mathcal{C} ::=&\ proj(t) = proj(t)\ |\ proj(t) \not = proj(t)\ |\ \mathcal{C} \land \mathcal{C}\\
\end{array}
&
\begin{array}{ll}
\mathcal{W} ::=&\ r := x\ |\ r := a
\end{array}
%
\end{array}
\]
\begin{center}
\textit{Constraints semantics}

\begin{minipage}{.24\textwidth}
\begin{flalign*}
\generatedSet{a} & ::= a\\
\generatedSet{r} & ::= \mathcal{R}(r)\\
\generatedSet{x} & ::= \mathcal{V}(x)\\
\end{flalign*}
\end{minipage}%
\begin{minipage}{.38\textwidth}
\begin{flalign*}
\generatedSet{proj(t)} & ::= proj(\generatedSet{t_1})\\
\generatedSet{C_1 \land C_2} & ::= \generatedSet{C_1} \land \generatedSet{C_2}\\
\end{flalign*}
\end{minipage}%
\begin{minipage}{.38\textwidth}
\begin{flalign*}
\generatedSet{t_1 = t_2} & ::= \generatedSet{t_1} = \generatedSet{t_2}\\
\generatedSet{t_1 \not = t_2} & ::= \generatedSet{t_1} \not = \generatedSet{t_2}\\
\end{flalign*}
\end{minipage}
\end{center}
\caption{The syntax of constraints and assignments for \STs transitions, and the semantics of the constraints.
In the syntax, $x$ is a symbol variable, $a$ is a constant from the data domain, $r$ is a register, and $proj$ is a projection function as defined in \logic.
}
\label{fig:st_transitions}
\end{figure}

Every transition constraint is a conjunction of terms that assert (non-)equality between projections of variables, registers, or constants from the domain.
Formally, the semantics of the constraints is given relative to variables valuation $\mathcal{V}: \mathcal{X} \rightarrow \mathcal{D}$ and register valuation  $\mathcal{R}: R \rightarrow \mathcal{D}$ as shown in Figure~\ref{fig:st_transitions}.

In a transition $(q, i, c, w, o, q')$, the part $(i, c, w, o)$ constitutes its \emph{label} and we usually write it as $i\cond{c};w/o$. Altogether, we depict the transition as $q\xrightarrow{i\cond{c};w/o}q'$.
If the space allows, we may break the line in the label instead of using the semi-colon, and we leave out $w$ and/or $c$ if they are empty (true, resp.).
We may also write $q\xrightarrow{a;w/o}q'$ as a shortcut for 
$q\xrightarrow{x\cond{x=a};w/o}q'$.
Examples of \STs can be found later in Figure~\ref{fig:transducers-1}. 

Given an \ST T,
the evaluation state is a pair $(q, \rho)$ where $q$ is a state and $\rho: R \rightarrow \mathcal{D}$ is a register valuation.
A run is an alternating sequence $s_0, a_0, s_1, a_1, ...$ of evaluation states and elements of $\mathcal{D}$ such that each $s_{i+1}$ can be obtained from $s_i = (q_i, \rho_i)$ by taking a transition from $q_i$ that is satisfied by $a_i$ and $\rho_i$ and results in $s_{i+1}$.
We refer the reader to existing papers for the formal definition~\cite{VeanesHLMB12,BjornerVeanes11}.

If a transducer $T$ accepts a word $x$ while produces a word $y$, we say that $T$ \emph{transduces} $x$ to $y$~\cite{DAntoniV17} and we write $T(x, y)$. We naturally extend this to sets: $T(x) = \{y\mid T(x, y)\}$, and $T(X) = \bigcup\limits_{x\in X}T(x)$.

Symbolic transducers are closed under concatenation and union. The algorithms for these constructions are analogous to those for finite-state automata~\cite{VeanesHLMB12,BjornerVeanes11,DAntoniV17}:

\begin{theorem}
    Given symbolic transducers $T$ and $T'$, we can construct transducers $T\conc T'$ and $T \cup T'$ s.t.,
    \begin{itemize}
    \item $T\conc T'(x, y)$ iff $T(x_0,y_0) \land T'(x_1, y_1)$ s.t., $x = x_0\conc x_1$ and $y = y_0\conc y_1$, and
    \item $T\cup T'(x, y)$ iff $T(x,y) \lor T'(x, y)$.
    \end{itemize}
\end{theorem}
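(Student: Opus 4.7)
The plan is to give Thompson-style constructions, analogous to those for $\epsilon$-NFAs, lifted to the symbolic transducer setting with registers. Before either construction, I rename the states and registers of $T$ and $T'$ so that the two transducers are disjoint in both components; denote the renamed versions as $T = (Q, q_0, \DtDomain, F, \Delta, R)$ and $T' = (Q', q_0', \DtDomain, F', \Delta', R')$ with $Q \cap Q' = \emptyset$ and $R \cap R' = \emptyset$. Because the formalism already permits $\epsilon$-input transitions, the bridging transitions I will add are expressible without further machinery.

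For $T \conc T'$, I form the transducer with state set $Q \cup Q'$, initial state $q_0$, final states $F'$, register set $R \cup R'$, and transition relation $\Delta \cup \Delta'$ extended with a fresh $\epsilon$-input transition $q \xrightarrow{\epsilon} q_0'$ (trivial constraint, no register updates, $\epsilon$ output) from every $q \in F$ to $q_0'$. Every accepting run of this transducer visits the $Q$-part first and, at the unique moment it traverses a bridging edge, switches to the $Q'$-part; disjointness of $R$ and $R'$ guarantees that neither component can modify the other's registers, so the two projected runs are genuine runs of $T$ and $T'$ on a prefix/suffix split of the input and output words. Both directions of the equivalence then follow by induction on run length.

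For $T \cup T'$, I add a fresh initial state $q^\star$ and take state set $\{q^\star\} \cup Q \cup Q'$, final states $F \cup F'$, register set $R \cup R'$, and transitions $\Delta \cup \Delta'$ together with $q^\star \xrightarrow{\epsilon} q_0$ and $q^\star \xrightarrow{\epsilon} q_0'$. An accepting run commits after its first step to exactly one of the two sub-transducers and then remains entirely inside the corresponding component, yielding either $T(x,y)$ or $T'(x,y)$; conversely, any accepting run of $T$ or $T'$ extends to an accepting run of $T \cup T'$ by prepending the appropriate $\epsilon$-transition out of $q^\star$.

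The main obstacle is not conceptual but notational: making the renaming of registers and symbol variables hygienic so that each transition's constraint references only variables and registers local to its original transducer. Once hygiene is established, a register of one component is never meaningfully read before a transition of that component writes to it, so the two components evolve independently between bridging moves. The correctness argument then reduces to the familiar automata-theoretic case, with the additional observation that the outputs are concatenated (for $T \conc T'$) or emitted by exactly one side (for $T \cup T'$) because the bridging $\epsilon$-transitions themselves emit no output.
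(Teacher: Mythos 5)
Your construction is correct and is exactly what the paper has in mind: the paper gives no explicit proof, merely noting that the constructions are analogous to the classical ones for finite-state automata (citing the symbolic transducer literature), and your Thompson-style $\epsilon$-bridging with disjoint state and register sets is precisely that standard construction lifted to the symbolic register setting. The hygiene point about renaming registers and symbol variables is the right detail to flag, and nothing further is needed.
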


\STs are closed also under sequential composition, which is an important operation that we discuss in the next subsection.

\paragraph*{Sequential compositions of symbolic transducers}

Given two transducers $T$ and $T'$, we can compose them together to obtain the transducer $T'(T)$ that recognizes the composition of relations recognized by $T$ and $T'$. That is, $T'(T)(x, y)$ iff $\exists z: T(x, z)$ and $T'(z, y)$.

Let $T = (Q_1, q_1, \mathcal{D}, F_1, \Delta_1, R_1)$ and
let $T' = (Q_2, q_2, \mathcal{D}, F_2, \Delta_2, R_2)$ be two \STs with no epsilon transitions (i.e.,~transitions that have $\epsilon$ as both input and output -- we can pre-process \STs to eliminate such transitions~\cite{VeanesHLMB12}).
The composition $T'(T)$ is the \ST $(Q_1\times Q_2, (q_1, q_2), \mathcal{D}, F_1\times F_2, \Delta, R_1 \cup R_2)$ with the transition relation $\Delta$ defined by the following two rules:
{
\begin{align*}
\inferrule{
q_1\xrightarrow{x_1\cond{C_1}\ ;\ W_1\,\out{o_1}}q'_1 \\
\qquad o_1 \not = \epsilon\\
q_2\xrightarrow{x_2\cond{C_2}\ ;\ W_2\,\out{o_2}}q'_2}
{(q_1,q_2)\xrightarrow{x\cond{x = x_1 \land x_2 = o_1 \land C_1 \land C_2}\,\ ;\ W_1 W_2\,\out{o_2}}(q'_1, q'_2) }\rname{INP} 
\end{align*}
\begin{align*}
\inferrule{q_1\xrightarrow{x_1\cond{C_1}\ ;\ W_1\,\out{\epsilon}}q'_1 }
{(q_1,q_2)\xrightarrow{x\cond{x = x_1 \land C_1}\,\ ;\ W_1 \,\out{\epsilon}}(q'_1, q_2)}\rname{INP-EPS}
\end{align*}
}
where $x_1$ and $x_2$ are different variables (we rename them if necessary), and $x$ is fresh in $C_1$ and $C_2$.
Note that in the composition, we can get labels like $\epsilon\cond{x = a};\out{x}$ which is not a valid label by definition. For simplicity, we just define that those labels correspond to $\epsilon\out{a}$. 

We require the constraints of the composed transitions to be satisfiable.

\begin{theorem}
Let $T = (Q_1, q_1, \mathcal{D}, F_1, \Delta_1, R_1)$ and
Let $T' = (Q_2, q_2, \mathcal{D}, F_2, \Delta_2, R_2)$ be two \STs with no epsilon transitions.
Then $T'(T)(x, y)$ iff $\exists z: T(x, z)$ and $T'(z, y)$
\end{theorem}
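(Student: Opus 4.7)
The plan is to prove both directions of the biconditional by induction on the length of accepting runs, establishing a tight correspondence between runs of the composed transducer $T'(T)$ and pairs of synchronized runs of $T$ and $T'$. The key invariant to maintain is that if the composition is in configuration $((q_1, q_2), \rho)$ after consuming some prefix $x_0$ of $x$ and having produced $y_0$, then there exists an intermediate word $z_0$ such that $T$ has a run from $(q_1^{\mathrm{init}}, \rho_0^{\mathrm{init}})$ to $(q_1, \rho|_{R_1})$ transducing $x_0$ to $z_0$, and $T'$ has a run from $(q_2^{\mathrm{init}}, \rho_0^{\mathrm{init}})$ to $(q_2, \rho|_{R_2})$ transducing $z_0$ to $y_0$. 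Wlog we can assume $R_1 \cap R_2 = \emptyset$ by renaming, so the register valuation splits cleanly.

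For the forward direction (soundness), given an accepting run of $T'(T)$ on $(x,y)$, I would walk through its transitions and classify each one as arising from either rule \textsc{INP} or rule \textsc{INP-EPS}. In each \textsc{INP} step, the combined constraint $x = x_1 \land x_2 = o_1 \land C_1 \land C_2$ being satisfied by the input symbol and the register valuation implies that the premise transition of $T$ is enabled (producing output $o_1$) and, crucially, that the premise transition of $T'$ is enabled when reading that same $o_1$. The concatenation of all such $o_1$'s (together with outputs produced by \textsc{INP-EPS} steps, which contribute nothing) gives the required witness word $z$. The register updates $W_1 W_2$ modify disjoint sets of registers, so projecting the valuation onto $R_1$ and $R_2$ yields consistent register valuations for the respective $T$ and $T'$ runs.

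For the backward direction (completeness), given runs of $T$ on $(x,z)$ and $T'$ on $(z,y)$, I would interleave them into a run of $T'(T)$ as follows: process each transition of $T$ in order; if it produces a non-epsilon output symbol $o_1$, pair it with the next transition of $T'$ (which must consume exactly that symbol from $z$) via \textsc{INP}; if it produces $\epsilon$, apply \textsc{INP-EPS} and leave $T'$'s state unchanged. The absence of input-$\epsilon$/output-$\epsilon$ transitions in $T$ and $T'$ ensures that this interleaving is well-defined: $T'$ only moves when $T$ emits a symbol, and $T$ always advances on each input symbol of $x$. Acceptance is preserved because $F_1 \times F_2$ is precisely the set of final states of the composition.

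The main technical obstacle will be the careful bookkeeping of register valuations across the two rules, together with verifying that the satisfiability requirement on the constraints $x = x_1 \land x_2 = o_1 \land C_1 \land C_2$ is neither vacuous nor restrictive: it must match the situations in which the two original transitions are jointly enabled under the synchronization $x_2 = o_1$. A minor subtlety is the notational convention mentioned in the text, that a composed label of the shape $\epsilon [x = a]; /x$ is to be read as $\epsilon/a$; in the proof this corresponds to observing that the variable $x$ is forced to the constant $a$, so the output is well-determined. Once these bookkeeping details are handled, both directions follow by a straightforward induction on the number of transitions.
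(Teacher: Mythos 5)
The paper itself does not give a self-contained proof of this theorem: it defers to \cite{VeanesHLMB12} for the case without input-$\epsilon$ transitions and to \cite[Section IV]{BjornerVeanes11} for the extension to input-$\epsilon$ transitions. Your direct simulation argument is therefore a different (and in principle welcome) route, and your forward direction is fine: every transition of $T'(T)$ is an instance of \textsc{INP} or \textsc{INP-EPS}, so an accepting run decomposes into a run of $T$ and a run of $T'$, with the intermediate word $z$ read off from the $o_1$ components and the register valuation split over $R_1$ and $R_2$ after renaming.

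There is, however, a genuine gap in your backward direction. The hypothesis ``no epsilon transitions'' only excludes transitions whose input \emph{and} output are both $\epsilon$; it does not exclude input-$\epsilon$ transitions of $T'$ that produce output (these are ubiquitous in this setting, e.g.\ the transducers $T_\alpha$ built from regular expressions consist entirely of them). Your interleaving rests on the claim that ``$T'$ only moves when $T$ emits a symbol,'' which silently assumes that every step of $T'$'s run on $z$ consumes a symbol of $z$. If the accepting run of $T'$ on $(z,y)$ contains input-$\epsilon$ steps --- in particular trailing ones taken after all of $z$ has been consumed, which may be needed to emit the remainder of $y$ or to reach a state in $F_2$ --- your construction has nowhere to place them: \textsc{INP} forces $T'$ to read $o_1$, and \textsc{INP-EPS} advances only $T$. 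Closing the gap requires either an additional composition rule in which $T'$ takes an input-$\epsilon$ transition while $T$ stays put (this is precisely what the treatment in \cite[Section IV]{BjornerVeanes11} adds), or strengthening the hypothesis to forbid input-$\epsilon$ transitions of $T'$ as in \cite{VeanesHLMB12}. With such a rule added, your induction and register bookkeeping go through; as written, the ``if'' direction does not follow from the two rules you use.
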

\begin{proof}
The proof is given in \cite{VeanesHLMB12} for \STs with no input-$\epsilon$ transitions and extended to \STs with input-$\epsilon$ transitions in \cite[Section IV]{BjornerVeanes11}.
\end{proof}

\paragraph*{Symbolic transducers for trace formulas}

\begin{figure}
\centering
\tikzset{
  nd/.style={draw, circle, minimum width=2em},
  acc/.style={double}
}
  \begin{tikzpicture}
  \node[nd, acc, initial left] (1) at (0, 0) {}; 
  \node[nd, acc] (2) at (3, 0) {}; 
  
  \draw[->] (1) to node[yshift=1.4em]{\ttrans{$x$}{$\store{r}{x}\out{x}$}} (2);
  \draw[->, loop above] (2) to node[yshift=0.5em]{\ttrans{$x\cnd{r = x}$}{}} (2);
  \draw[->, loop below] (2) to node[yshift=-0.2em]{\ttrans{$x\cnd{r \not = x}$}{$\store{r}{x}\out{x}$}} (2);
  %
  %
  %
  
  \node [nd, initial left] (2) at (6, 0) {};
  \node [nd, acc] (3) at (9, 0) {};
  \node [nd, acc] (4) at (12, 0) {};

  \node (mid1) at ($(2)!0.5!(3)$, 0) {$...$};
  \node (mid2) at ($(3)!0.5!(4)$, 0) {$...$};
  
  \draw (2)    to node[yshift=3mm]  {$x\out{\epsilon}$} (mid1);
  \draw (mid1) to node[yshift=3mm] {$x\out{\epsilon}$} (3);

  \draw (3)    to node[yshift=3mm]  {$x\out{x}$} (mid2);
  \draw (mid2) to node[yshift=3mm] {$x\out{x}$} (4);
  
  \draw [decorate,decoration={brace,amplitude=10pt}, -] ($(2.north east) + (0, 3mm)$) -- ($(3.north west) + (0, 3mm)$) node [black,midway,yshift=16pt] {$k$-times};
  
  \draw [decorate,decoration={brace,amplitude=10pt}, -] ($(3.north east) + (0, 3mm)$) -- ($(4.north west) + (0, 3mm)$) node [black,midway,yshift=16pt] {$m+1$-times};

  \draw [loop below] (4) to node{$x\out{\epsilon}$} (4);
  \end{tikzpicture}
    \caption{The \ST $T_{\stred{.}}$ for doing stutter reduction (left),
             and $T_{[k:k+m]}$ for taking the slice $[k:k+m]$ (right).
            }
    \label{fig:transducers-1}
\end{figure}
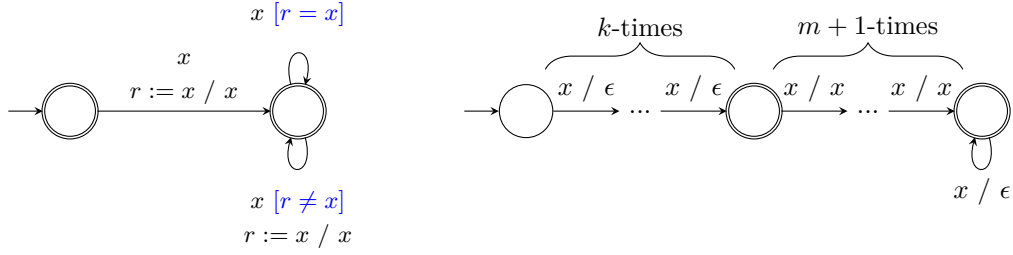

Sequential composition, concatenation, and union are sufficient to define the translation of trace formulas into symbolic transducers:
\begin{itemize}
    \item For regular expression $\alpha$, we construct the automaton $A_\alpha$ that recognizes $\alpha$ and then translate it into the transducer $T_\alpha$ by turning all transitions $q \xrightarrow{a} q'$ from $A_\alpha$ to $q \xrightarrow{\epsilon\out{a}} q'$.
    \item For the system variable, we create the single state transducer $T_x$ that copies the inputs to outputs.
    \item For a slice $\psi[k, k+m]$ we build the composed transducer $T_{[k,k+m]}(T_\psi)$ (with $T_{[k,k+m]}$ as defined in Figure~\ref{fig:transducers-1}).
    \item For the stutter-reduced formula $\stred{\psi}$, we build the composed transducer $T_{\stred{\psi}} = T_{\stred{.}}(T_\psi)$ (with $T_{\stred{.}}$ as defined in Figure~\ref{fig:transducers-1})
    \item For the projection $proj(\psi)$ we built the transducer $T_{proj}(T_\psi)$. This means that we assume that $proj$ can be implemented as a (symbolic) transducer $T_{proj}$.
    \item For formulas $\psi_1\conc\psi_2$ and $\psi_1 + \psi_2$, we construct transducers $T_{\psi_1}\conc T_{\psi_2}$ and $T_{\psi_1}\cup T_{\psi_2}$, resp.
\end{itemize}

\subsection{Symbolic automata for prefix comparison}

Assume we have two \STs $T_1$ and $T_2$.
In this subsection, we show how to construct a \emph{symbolic register automaton} (symbolic transducer with no outputs, i.e.,~with all outputs equal to $\epsilon$) with two input tapes that accepts words $(w_1, w_2) \in \mathcal{D}^*\times\mathcal{D}^*$ only if $\exists y_1\in T_1(w_1), y_2\in T_2(w_2)$ s.t. $y_1 \le y_2$.

\newcommand{\spair}[2]{\begin{bmatrix}#1\\#2\end{bmatrix}}

Let $T = (Q_1, q_1, \mathcal{D}, F_1, \Delta_1, R_1)$ and
Let $T' = (Q_2, q_2, \mathcal{D}, F_2, \Delta_2, R_2)$ be two \STs with no epsilon steps and redundant states (mainly, there is an accepting state reachable from every state).
The symbolic 2-tape register automaton  $A_{T \le T'}$ is the \ST $(Q_1\times Q_2, (q_1, q_2), \mathcal{D}, F_1, \Delta, R_1 \cup R_2)$ with the transition relation $\Delta$ defined by the following rules:
{
\begin{align*}
\inferrule{
q_1\xrightarrow{x_1\cond{C_1}\ ;\ W_1\,\out{o_1}}q'_1 \\
q_2\xrightarrow{x_2\cond{C_2}\ ;\ W_2\,\out{o_2}}q'_2\\
o_1 \not = \epsilon\\
o_2 \not = \epsilon\\
}
{(q_1,q_2)\xrightarrow{\spair{x_1}{x_2}\,\cond{o_1 = o_2 \land C_1 \land C_2}\,\ ;\ W_1 W_2}(q'_1, q'_2) }\rname{LR}
\end{align*}
\begin{align*}
\inferrule{q_1\xrightarrow{x_1\cond{C_1}\ ;\ W_1\,\out{\epsilon}}q'_1 }
{(q_1,q_2)\xrightarrow{\spair{x_1}{\epsilon}\,\cond{C_1}\,\ ;\ W_1}(q'_1, q_2)}\rname{L-eps}
&&
\inferrule{q_2\xrightarrow{x_2\cond{C_2}\ ;\ W_2\,\out{\epsilon}}q'_2 }
{(q_1,q_2)\xrightarrow{\spair{\epsilon}{x_2}\,\cond{C_2}\,\ ;\ W_2}(q_1, q'_2)}\rname{R-eps}
\end{align*}
}
where $x_1$ and $x_2$ are different variables (we rename them if necessary). Again, we require the constraints of the composed transitions to be satisfiable.

\begin{figure}
\centering
\tikzset{
  nd/.style={draw, circle, minimum width=2em},
  acc/.style={double}
}
\begin{tikzpicture}

\node[nd, initial left] (l0) at (0, 0) {};
\node[nd,acc] (l1) at (3, 0) {};
\node[left=.6cm of l0] (varphi1) {$\varphi_1: $};

\draw (l0) to node[yshift=4mm]{$\epsilon\,;\store{r}{a}\out{a}$} (l1);
\draw[loop above] (l1) to node[yshift=0mm]{$y\,\cond{y=r}\out{\epsilon}$} (l1);
\draw[loop below] (l1) to node[yshift=0mm]{$y\,\cond{y\not=r};\store{r}{y}\out{y}$} (l1);

\node[nd, initial left] (r0) at (7, 0) {};
\node[nd,acc] (r1) at (10, 0) {};
\node[left=.6cm of r0] (varphi2) {$\varphi_2: $};

\draw (r0) to node[yshift=4mm]{$\epsilon\,\out{a}$} (r1);
\draw[loop above] (r1) to node[yshift=0mm]{$x\,\out{x}$} (r1);


\node[nd, initial left] (0) at (2.5, -4) {};
\node[nd,acc] (1) at (7.5, -4) {};
\node[left=.6cm of 0] (varphi1) {$\varphi_1 \le \varphi_2: $};

\draw (0) to node[yshift=6mm]{$\spair{\epsilon}{\epsilon};\store{r}{a}$} (1);
\draw[loop above] (1) to node[yshift=0mm]{$\spair{y}{x}\cond{y=x\land y=r}$} (1);
\draw[loop below] (1) to node[yshift=0mm]{$\spair{y}{x}\cond{y=x\land y\not=r};\store{r}{y}$} (1);

\end{tikzpicture}
\caption{Symbolic transducers for formulas $\varphi_1 = \stred{a\conc\pv{y}{\pi}}$ and $\varphi_2 = a\conc\pv{x}{\pi'}$ (top), and the symbolic register automaton for the formula $\varphi_1 \le \varphi_2$ (bottom). The figure hides redundant states and transitions.}
\label{fig:translation:ex}
\end{figure}
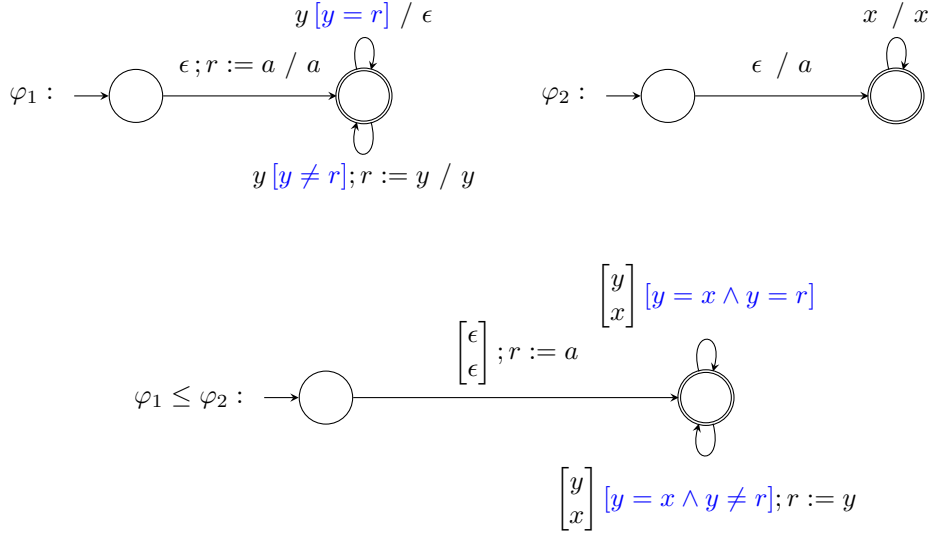

\begin{example}
Figure~\ref{fig:translation:ex} shows symbolic transducers for formulas
$\varphi_1 = \stred{a\conc\pv{y}{\pi}}$ and $\varphi_2 = a\conc\pv{x}{\pi'}$ (top), and the symbolic register automaton for the formula $\varphi_1 \le \varphi_2$.
Because the formula $\varphi_1$ is stutter-reduced $\varphi_2$, the transducer $T_{\varphi_1}$
is obtained as $T_\stred{.}(T_{\varphi_1})$. The result is basically adding a register that tracks if the current output symbol is different from the last output symbol and it is turned into $\epsilon$ if not.
The construction of the automaton for $\varphi_1 \le \varphi_2$ simply combines the transducers for $\varphi_1$ and $\varphi_2$, synchronizes them on the outputs, and then it removes the outputs.
\end{example}

A run of this 2-tape symbolic register automaton is defined analogously to the run of finite-state 2-tape automata, where we skip reading a trace if the corresponding input symbol is $\epsilon$.
However, there is one difference: we accept whenever the left trace is read entirely, and the automaton finishes in an accepting state (notice that the accepting states are those from the left automaton).
That is, for accepting, the right trace does not need to be read entirely.

Alternatively, instead of defining the acceptance condition to consider only the end of the left trace, we could add an epsilon transition from each accepting state to a state which is also accepting, and which consumes the rest of the right trace. This would introduce non-determinism that in practice can be solved either by using priorities on the edges (to forbid trying to consume the rest of the right trace before we still can read something from the left trace),
or by taking the transition via an explicit trace termination symbol instead of epsilon.
%



\section{The monitoring algorithm for \logic}
\label{app:algorithm}

In this section of Appendix, we provide details and the pseudo-code for the algorithm for monitoring \logic.
\bigskip

We start with recalling the top-level structure of the algorithm, which
works intuitively as follows.
Assume a formula $\psi = Q^*_1Q^*_2...Q^*_k.\ \psi_\mathit{qf}$ where
$Q_1, Q_2, ...$ are types of quantifiers such that each $Q_i$ and $Q_{i+1}$
are different types.
To monitor this formula, the algorithm recursively creates and evaluates monitors: the first created monitor instantiates quantifiers $Q_1^*$, and for each instance of quantifiers from $Q_1^*$ (represented as a partial trace assignment), it creates a (sub-)monitor for the formula $Q^*_2...Q^*_k.\ \psi_\mathit{qf}$ (propagating along the assignment of quantifiers $Q_1^*$).
Every monitor for formula $Q^*_2...Q^*_k.\ \psi_\mathit{qf}$ then instantiates quantifiers $Q_2^*$ and for each instance of the quantifiers creates a monitor for $Q^*_3...Q^*_k.\ \psi_\mathit{qf}$, and so on, until all but the last quantifiers are taken care of and we are left with the formula $Q_k^*.\ \psi_\mathit{qf}$.
This is a formula with no quantifier alternation and thus we can monitor it using an basic monitor (see Section~\ref{ssec:basic_mon}).

For simplicity, the previous paragraph ignores whether a quantifier is universal or existential.
This must be taken into account, though. Basic monitors require that the input formula is universally quantified
and if it is not, we must transform the formula using the equality $\exists \pi_1, ..., \pi_k.\ \psi_\mathit{qf} \equiv \neg(\forall \pi_1, ..., \pi_k.\ \neg\psi_\mathit{qf} )$.
That is, instead of existentially quantified formula $\exists \pi_1, ..., \pi_k.\ \psi_\mathit{qf}$, we monitor the formula $\forall \pi_1, ..., \pi_k.\ \neg\psi_\mathit{qf}$ and once we obtain the result, we negate it.
To approach the monitors uniformly, we handle all existential quantifiers like this.
In summary, every (sub-)monitor except basic monitors does two things: instantiates quantifiers $Q_i$ for some $i$, and creates sub-monitors to evaluate the (possibly negated) rest of the  formula. Once (and if) all sub-monitors are evaluated, the monitor concludes with the result, negating it if the sub-formula  was negated.

For a more concrete example, take Figure~\ref{fig:qi}. This figure depicts the structure of monitors for the formula $\forall \pi\forall \pi'\exists\pi''\in f(\pi').\ \pv{x}{\pi} \le \pv{x}{\pi'} \lor \pv{x}{\pi}\le\pv{x}{\pi''}$.
The monitoring algorithm creates the top-level monitor that instantiates $\pi$ with all observed traces, and for each such instance it creates a sub-monitor that monitors the sub-formula $\forall \pi'\exists\pi''\in f(\pi').\ \pv{x}{\pi} \le \pv{x}{\pi'} \lor \pv{x}{\pi}\le\pv{x}{\pi''}$ (with $\pi$ instantiated).
Each of these sub-monitors then instantiates $\pi'$ with all observed traces and creates another layer of sub-monitors.
These sub-monitors monitor for the sub-formula
$\exists\pi''\in f(\pi').\ \pv{x}{\pi} \le \pv{x}{\pi'} \lor \pv{x}{\pi}\le\pv{x}{\pi''}$
where $\pi$ and $\pi'$ have been instantiated to concrete traces.
Because this sub-formula has only one quantifier type, these sub-monitors are in fact basic monitors.
Moreover, because the sub-formula is existentially quantified, the algorithm negates the sub-formula to turn the existential quantifier into universal quantifier; the result of these monitors is negated back after (and if) they return a result.
The basic monitors internally instantiate $\pi''$ to every trace from $f(t)$ where $t$ is the trace to which $\pi'$ is mapped in the current monitor, and check that the (now quantifier-free) formula holds for any such $\pi''$.

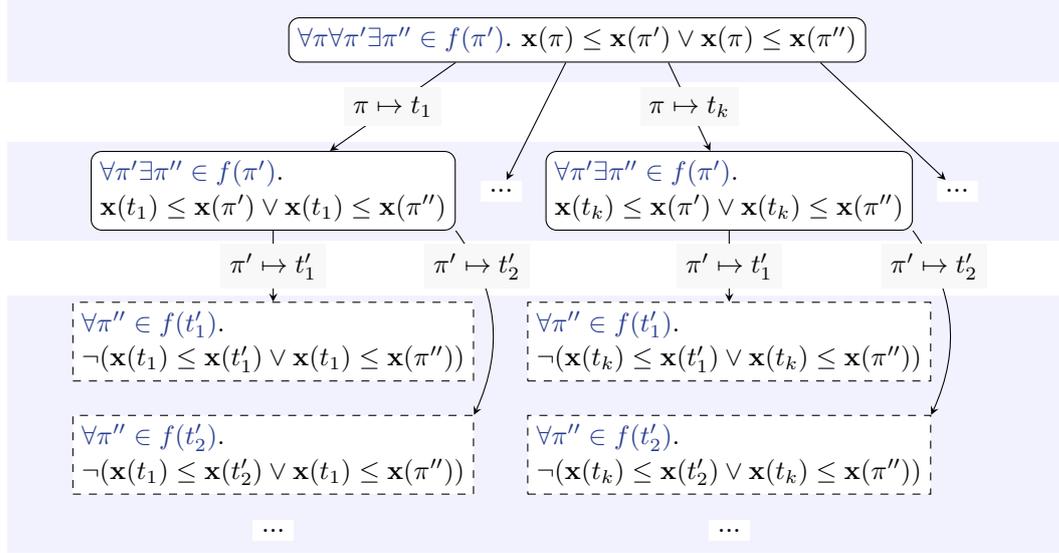
\begin{figure}
\begin{tikzpicture}
  \node[draw, fill=white, rounded corners] (M) at (0, 0) {
    $\textcolor{istablue}{\forall \pi\forall \pi'\exists\pi''\in f(\pi')}.\ \pv{x}{\pi} \le \pv{x}{\pi'} \lor \pv{x}{\pi}\le\pv{x}{\pi''}$};

  \node[draw, fill=white, rounded corners, align=left] (M21) at (-4, -2)
    { $\textcolor{istablue}{\forall \pi'\exists\pi''\in f(\pi')}.$\\
      $\pv{x}{t_1} \le \pv{x}{\pi'} \lor \pv{x}{t_1}\le\pv{x}{\pi''}$};
  \node[draw, fill=white, rounded corners, align=left] (M22) at ( 2, -2)
            { $\textcolor{istablue}{\forall \pi'\exists\pi''\in f(\pi')}.$\\
              $ \pv{x}{t_k} \le \pv{x}{\pi'} \lor \pv{x}{t_k}\le\pv{x}{\pi''}$};

  \node[draw, fill=white, dashed, align=left] (M31) at (-4, -4)
    { $\textcolor{istablue}{\forall\pi''\in f(t'_1)}.$\\
      $\neg(\pv{x}{t_1} \le \pv{x}{t'_1} \lor \pv{x}{t_1}\le\pv{x}{\pi''})$};

  \node[draw, fill=white, dashed, align=left] (M32) at (-4, -5.5)
    { $\textcolor{istablue}{\forall\pi''\in f(t'_2)}.$\\
      $\neg(\pv{x}{t_1} \le \pv{x}{t'_2} \lor \pv{x}{t_1}\le\pv{x}{\pi''})$};

  \node[draw, fill=white, dashed, align=left] (M33) at (2, -4)
    { $\textcolor{istablue}{\forall\pi''\in f(t'_1)}.$\\
      $\neg(\pv{x}{t_k} \le \pv{x}{t'_1} \lor \pv{x}{t_k}\le\pv{x}{\pi''})$};

  \node[draw, fill=white, dashed, align=left] (M34) at (2, -5.5)
    { $\textcolor{istablue}{\forall\pi''\in f(t'_2)}.$\\
      $\neg(\pv{x}{t_k} \le \pv{x}{t'_2} \lor \pv{x}{t_k}\le\pv{x}{\pi''})$};
      
  \node[align=left, fill=white] (M2dots) at ( -1,   -2){\large ...};
  \node[align=left, fill=white] (M21dots) at ( 5,   -2){\large ...};
  \node[align=left, fill=white] (M31dots) at (-4, -6.5){\large ...};
  \node[align=left, fill=white] (M34dots) at (2, -6.5){\large ...};

    \draw ($(M.south)-(16mm,0)$) to node[fill=gray!5]{$\pi \mapsto t_1$} (M21);
    \draw ($(M.south)+(12mm,0)$) to node[fill=gray!5]{$\pi \mapsto t_k$} (M22);
    \draw (M) edge (M2dots);
    \draw ($(M.south)+(32mm,0)$) to (M21dots);
    
    \draw (M21) edge node[fill=gray!5]{$\pi' \mapsto t'_1$} (M31);
    \draw (M21.south east) to[bend left] node[pos=.2,fill=gray!5]{$\pi' \mapsto t'_2$} (M32.north east);
    
    \draw (M22) edge node[fill=gray!5]{$\pi' \mapsto t'_1$} (M33);
    \draw (M22.south east) to[bend left] node[pos=.2,fill=gray!5]{$\pi' \mapsto t'_2$} (M34.north east);

  \begin{scope}[on background layer]
  \node[fill=blue!5, minimum width=\textwidth,minimum height=1.1cm] at (-.5, 0) {};
  \node[fill=blue!5, minimum width=\textwidth,minimum height=1.3cm] at (-.5, -2) {};
  \node[fill=blue!5, minimum width=\textwidth,minimum height=3.4cm] at (-.5, -5.1) {};
  \end{scope}
\end{tikzpicture}
\caption{An example of how sub-monitors instantiate quantifiers. Every node represents a monitor,
and the blue regions suggest their recursive creation.
Results of dashed monitors are negated and the monitors in the bottom layer are the basic monitors. }
\label{fig:qi}
\end{figure}

\newcommand{\subs}{\va{submonitors}}

\SetKwFunction{step}{\textnormal{\textsc{step}}}
\SetKwFunction{ehlmon}{\textnormal{\textsc{basic-mon}}}
\SetKwFunction{stepehl}{\textnormal{\textsc{step-basic}}}
\SetKwFunction{handleehl}{\textnormal{\textsc{handle-basic}}}

The full monitoring algorithm for \logic is shown in Algorithm~\ref{alg:mon_hnl}, Algorithm~\ref{alg:step_mon}, and Algorithm~\ref{alg:step_ehl}.
The pseudo-code does not show updating and extending observed traces. We assume that we have access to an object that represents these traces and this object is being updated independently of the monitoring code (e.g., in a separate thread). That is, new traces and new events to known traces ,,come'' in the background and we can query for the new traces and new events on traces.

Generating traces from generator functions proceeds via objects $F_{f_1}, ..., F_{f_r}$ (where $f_1, ..., f_r$ are generator functions used in the monitored formula).
These objects are created in Algorithm~\ref{alg:mon_hnl} on line~\ref{alg:mon_hnl:ln-funs}.
Given an object $F_f$ for a function $f(\pi_1, ..., \pi_k)$, we assume that it internally maintains any set of traces $f(t_1, ..., t_k)$ created during monitoring, and we denote obtaining this set as $F_f(t_1, ..., t_k)$.
Upon its creation, the set $F_f(t_1, ..., t_k)$ may be empty or not fully populated, because traces $t_1, ..., t_k$ may not be fully known yet.
Therefore, the implementation of $f$ either must wait until traces $t_1, ..., t_k$ are fully known, or, if possible, it may generate the set $F_f(t_1, ..., t_k)$ incrementally as new events to input traces come. 
Note that this is not a problem and it fits well into our setup where also the observations come step by step.
Similarly as for observed traces, we do not show updating the generated traces in the pseudo-code and we assume that the updates to $F_f(t_1, ..., t_k)$ are triggered by updates to $t_1, ..., t_k$.
If we would like to add handling these updates explicitly, we can add this code into the loop in Algorithm~\ref{alg:mon_hnl} or after line~\ref{alg:step_mon:ln-genset} in Algorithm~\ref{alg:step_mon}.

Except initializing the implementation of generator functions, Algorithm~\ref{alg:mon_hnl} handles the top-level iterations of the monitoring algorithm: there is the loop that repeatedly calls the procedure $\step(M)$ (line~\ref{alg:mon_hnl:ln-step}) which triggers the next computation step of the top-level monitor $M$ (this procedure is described in Algorithm~\ref{alg:step_mon}).
Procedure $\step$ returns either the result of the evaluation of the formula $\psi$ ($\va{false}$ or $\va{true}$), or it returns $\va{unknown}$ if the monitor has no conclusive result yet.
Monitors (except basic monitors) are represented as a triple $(\psi, \Pi, \subs)$ where $\psi$ is the (sub-)formula to monitor,
$\Pi$ is a partial trace assignment that describes already instantiated quantifiers, and $\subs$ is a set of sub-monitors that are used to monitor sub-formulas.

\SetKwInput{KwReq}{require}

\IncMargin{1em}

\begin{algorithm}[t]
  \DontPrintSemicolon
  \KwData{Closed \logic formula $\psi = Q_1^* Q_2^*...Q_k^*: \psi_{qf}$ with $k > 0$}
  \KwResult{$\va{false}$ or $\va{true}$ if it terminates}
  \vspace*{.8em}
  

  \tcp{Implementations of functions\\(passed around implicitly in the rest of the algorithm)}
  $F_{f_1}, ..., F_{f_r}$\label{alg:mon_hnl:ln-funs}\;
  \BlankLine
  \BlankLine

  \tcp{The top-level monitor. It is a triple:\\ (formula to monitor, partial trace assignment, sub-monitors)}
  $M \gets (\psi, \emptyset, \emptyset)$\label{alg:mon_hnl:ln-M}\; 

  \BlankLine
  \BlankLine
  
  \While{$\mathit{true}$} { \label{alg:mon_hnl:while}
    $r \gets \mathit{\step(M)}$\label{alg:mon_hnl:ln-step}\;
    \If{$r \not = \va{unknown}$} { 
      \KwRet $r$
    }
  }

\caption{The main loop of the algorithm to monitor \logic formulas.}
\label{alg:mon_hnl}
\end{algorithm}


\begin{algorithm}
  \DontPrintSemicolon
  \KwData{Monitor $m = (\psi, \Pi, submonitors)$}
  \KwResult{If it terminates, the results of evaluating $\psi$ is returned.}
  \vspace*{.8em}

  \SetKwFunction{step}{\textnormal{\textsc{step}}}
  \SetKwProg{Proc}{Procedure}{}{}

  \Proc{\step{$m$}} {
    Let $m = (Q_1^*Q_2^*...Q_k^*: \psi_\mathit{qf}, \Pi, \subs)$\;
    \BlankLine
    
    \If {$k = 1$} { \label{alg:step_mon:ln-k1}
        \KwRet $\handleehl(m)$\;
    }
    \BlankLine
    \BlankLine

    \eIf{$Q_1$ is active} {
        let $f(\pi_1, ..., \pi_n)$ be the generator function for $Q_1$\;
        $T = F_f(\Pi(\pi_1), ..., \Pi(\pi_n))$\;
        \label{alg:step_mon:ln-genset}
    } {
        $T = $ observed traces
    }
    \BlankLine
    \BlankLine
    \If {there is a new trace $t$ in $T$ not yet considered by $m$} {
        \label{alg:step_mon:ln-newtrace}
        \tcp{Create new assignments with $t$}
        \ForEach{partial trace assignment $\Pi'$ s.t. \\
                            \quad$\Pi \subseteq \Pi'$ $\land$ 
                            $\Pi'(q) \in T$ for all $q \in Q_1$ $\land$ 
                            $t \in \Pi'(Q_1)$ } {
            \eIf {$Q_1$ is existential} {
                \label{alg:step_mon:ln-mon1}
                $m' \gets (\bar{Q}_2^*...\bar{Q}_k^*: \neg\psi_\mathit{qf}, \Pi', \emptyset)$
            } {
                $m' \gets ({Q}_2^*...{Q}_k^*: \psi_\mathit{qf}, \Pi', \emptyset)$
                \label{alg:step_mon:ln-mon2}
            }
            
            $\subs \gets \subs \cup \{m'\}$ 
            \label{alg:step_mon:ln-newtrace-end}
        }
    }
    
    \BlankLine
    \BlankLine
    \ForEach{$m'\in\subs$} {
        $r\gets \step(m')$\; \label{alg:step_mon:ln-step}
        \lIf{$r = \va{unknown}$} { \KwCont }
        \eIf{$r = \va{false}$} { 
            \tcp{conclusive result}
            \leIf{$Q_1$ is existential} {
                \KwRet $\va{true}$
                \label{alg:step_mon:ln-ret1}
            }{
                \KwRet $\va{false}$
            }
        }{
            \tcp{$r = \va{true}$}
            $\subs \gets \subs \setminus \{m'\}$
        }
    }
    \BlankLine
    \BlankLine
    {
    \tcp{Termination check}
    \If{$\subs = \emptyset$ and $T$ will have no future updates} {
            \label{alg:step_mon:ln-termination}
            \leIf{$Q_1$ is existential} {
                \label{alg:step_mon:ln-ret2}
                \KwRet $\va{false}$
            }{
                \KwRet $\va{true}$
            }
    
    }
    }
  }

\caption{Evaluating next step of a (non-basic) monitor}
\label{alg:step_mon}
\end{algorithm}

The core of the monitoring algorithm is implemented in the procedure $\step$ in Algorithm~\ref{alg:step_mon}.
This procedure takes a monitor $m = (\psi, \Pi, \subs)$, and does at most four actions:
\begin{itemize}
\item [1)] If the formula  $\psi$ has only one type of quantifiers (line~\ref{alg:step_mon:ln-k1}), then the formula (or its negation) can be monitored using an basic monitor.
Therefore, the algorithm delegates further steps to the procedure $\handleehl$, which we describe later.
\item [2)] If $k > 1$ (note that the algorithm has the requirement that $k > 0$), then the task of this monitor is to instantiate quantifiers $Q_1^*$. The algorithm obtains the set of traces $T$ that are quantified by $Q_1$ and if there is a new trace that this monitor has not considered yet, it creates new trace assignments with this trace and new sub-monitors with this trace assignment (lines~\ref{alg:step_mon:ln-newtrace}--\ref{alg:step_mon:ln-newtrace-end}).
\item [3)] If $k > 1$, the algorithm iterates through all sub-monitors of the current monitor and recursively calls $\step$ on them (line~\ref{alg:step_mon:ln-step}).
\item [4)] If all sub-monitors returned $\va{true}$, the algorithm checks if the current monitor can be terminated (line~\ref{alg:step_mon:ln-termination}).
\end{itemize}

In step 2), the algorithm creates all partial trace assignments $\Pi'$ that extend the partial trace assignment $\Pi$ which captures the instantiation of quantifiers from the parent monitor.
$\Pi'$ is required to assign traces from $T$ to all quantifiers from $Q_1^*$, and that it assigns at least one of those quantifiers with the new trace $t$ ($t\in\Pi'(Q_1)$).
Therefore, $\Pi'$ assigns traces to all quantifiers in $Q_1^*$, and also all quantifiers that precede $Q_1^*$ in the monitored formula (as it extends $\Pi$).

Once we have a trace assignment $\Pi'$, the algorithm creates the sub-monitor for the rest of the formula where we fix the trace assignment $\Pi'$ (lines \ref{alg:step_mon:ln-mon1}--\ref{alg:step_mon:ln-mon2}).
If the current quantifier type is existential, we also negate the rest of the formula: we swap existential quantifiers to universal and vice versa (the operation $\bar{Q}$) and negate the body of the formula.
This is then reflected when returning the result on lines \ref{alg:step_mon:ln-ret1} and \ref{alg:step_mon:ln-ret2} where we negate the result back.

Every new monitor $(\psi', \Pi', \emptyset)$ generated on lines  \ref{alg:step_mon:ln-mon1} or \ref{alg:step_mon:ln-mon2} is stored into $\subs$ so that the algorithm recursively calls $\step$ on the monitor on line~\ref{alg:step_mon:ln-step}.
If the monitor returns the result $\va{false}$, it means that the sub-formula $\psi'$ is not satisfied for the trace assignment $\Pi'$ (more precisely, for some trace assignments that extends $\Pi'$) and this monitor returns with a conclusive result (line~\ref{alg:step_mon:ln-ret1}).
If a sub-monitor returns $\va{true}$, we know that all trace assignments that extend $\Pi'$ satisfy the sub-formula, so we only remove this sub-monitor and continue monitoring other sub-monitors.

If we find out that the set of traces $T$ will have no updates in the future and all sub-monitors finished, we can terminate the current monitor. It may happen that this condition will never hold, though.


\begin{algorithm}[t]
  \DontPrintSemicolon
  \KwData{Monitor $m$}
  \KwResult{The monitor object for $\psi$}
  \vspace*{.8em}

  \SetKwProg{Proc}{Procedure}{}{}

  \Proc{\handleehl{$m$}} {
    Let $m = (Q^*: \psi_\mathit{qf}, \Pi, \subs)$\;
    \BlankLine
    
    \eIf{$Q$ is active} {
        let $f(\pi_1, ..., \pi_n)$ be the generator function for $Q_1$\;
        $T = F_f(\Pi(\pi_1), ..., \Pi(\pi_n))$ \label{alg:step_ehl:ln-genset}
    } {
        $T = $ observed traces
    }
    \BlankLine
    \BlankLine
    
    \If {$\subs = \emptyset$} {
        \tcp{Create the basic monitor}
        \eIf {$Q$ is existential} {
            $m' \gets \ehlmon(\neg\psi_\mathit{qf}, \Pi, T)$
        } {
            $m' \gets \ehlmon(\psi_\mathit{qf}, \Pi, T)$
        }
        \BlankLine
        
        $\subs \gets \subs \cup \{m'\}$ 
    }
    \BlankLine
    let $m' \in \subs$\;
    $r\gets \stepehl(m')$\;
    \eIf{$r = \va{unknown}$} {
        \KwRet $\va{unkown}$
    }
    { 
        \leIf{$Q_1$ is existential} {
            \KwRet $\neg r$
        }{
            \KwRet $r$
        }
    }
  }

\caption{Handling basic monitors.}
\label{alg:step_ehl}
\end{algorithm}

The last piece of the algorithm is handling basic monitors, shown in Algorithm~\ref{alg:step_ehl}.
This code is called whenever function $\step$ is called for a monitor $(\psi', \Pi, \subs)$ where $\psi'$ has only a single type of quantifiers and therefore it or its negation can be monitored using an basic monitor. When the monitoring procedure for basic monitors is based on the procedure in~\cite{hypernodeMonitor24}, it runs in a potentially infinite loop. However, it is straightforward to modify it to perform the computation incrementally as the rest of our monitors do, and to return also $\va{true}$ or $\va{unknown}$ if no violation of the formula has been found (given the set of input traces is finite), or when no results have been reached yet. We call this procedure $\stepehl$.

The procedure $\handleehl$ first checks if the basic monitors has been already created or not. This check is done through the set $\subs$ that contains the monitor iff it has been created.
The monitor itself is created through calling $\ehlmon$ that takes the formula to monitor, the assignment to trace variables $\Pi$, and a set of input traces which are either observations or generated traces and that will instantiate the remaining unassigned trace variables.
Again, if the quantifier of the formula is existential, we negate the body of the formula and upon receiving a conclusive result we negate the result back.

\subsubsection{Discussion}
The monitoring algorithm in essence only incrementally evaluates the input formula on the input traces.
As a result, the result of monitoring a formula must be discussed in a context: for example, if the monitor concludes with $\va{false}$ for a $\forall\exists$ formula with the first quantifier passive and the other quantifier active,
we cannot conclude that the system is faulty unless the active quantifier can truly over-approximate the set of necessary traces.

To see that the monitoring algorithm is correct, we briefly sketch an inductive proof.
We take the correctness of basic monitors as granted, which gives us the base case ($k = 1$ during calling $\step$).
Now assume we have a formula $Q_1^*Q_2^*...Q_k^*.\ \psi_\mathit{qf}$ with $k > 1$ and that for formulas with smaller $k$ that algorithm correctly evaluates those formulas to $\va{true}$ or $\va{false}$ if it terminates.
For the formula $Q_1^*Q_2^*...Q_k^*.\ \psi_\mathit{qf}$, the algorithm considers all possible instances of the left-most quantifier and for each such instance it creates a sub-monitor for $Q_2^*...Q_k^*.\ \psi_\mathit{qf}$ which is correct by the induction hypothesis.
Now assume that all of the sub-monitors return $\va{true}$ or $\va{false}$.
If some of the sub-monitors returned $\va{false}$, the sub-formula $Q_2^*...Q_k^*.\ \psi_\mathit{qf}$ (correctly) evaluates to $\va{false}$ for the given traces. If $Q_1$ is universal, the algorithm returns $\va{false}$ which is the correct result as the whole formula is not satisfied. If $Q_1$ is existential, it means that $\neg(Q_2^*...Q_k^*.\ \psi_\mathit{qf})$ is $\va{false}$ and therefore $Q_2^*...Q_k^*.\ \psi_\mathit{qf}$ is $\va{true}$ which is what the algorithm returns.
We discuss that case when all sub-monitors return $\va{true}$ analogously.
In all other cases, the algorithm does not terminate, because some of the sub-monitors did not terminate.

%







\end{document}